\providecommand{\algorithmname}{Algorithm}
\newcounter{rmq}[section]
\newcommand{\R}{\mathbb{R}}
\newcommand{\naturals}{\mathbb{N}}
\newcommand{\setX}{\mathbb{X}}
\newcommand{\sigX}{\mathcal{X}}
\newcommand{\setY}{\mathbb{Y}}
\newcommand{\sigY}{\mathcal{Y}}
\renewcommand{\P}{\mathbb{P}}
\newcommand{\E}{\mathbb{E}}
\newcommand{\ass}{\mbox{$\mathbb{P}$-a.s.}}
\newcommand{\F}{\mathcal{F}} 
\newcommand{\ind}{\mathbf{1}}
\newcommand{\dd}{\mathrm{d}}
\newcommand{\opnm}{\interleave}
\newcommand{\comment}[1]{ \ifthenelse{ \equal{\showcomment}{true} }{ {\bf #1} }{} }
\newcommand{\showcomment}{true}
\newcommand{\subscript}[2]{$#1 _ #2$}
\newtheorem{thm}{Theorem}
\newtheorem{prop}{Proposition}
\newtheorem{lemma}{Lemma}
\newtheorem{defn}{Definition}
\newtheorem{remark}{Remark}
\title{Stability with respect to initial conditions in V-norm for nonlinear filters with ergodic observations}
\begin{document}

\author{Mathieu Gerber
\and Nick Whiteley}

\date{\textit{Harvard University   and University of Bristol}}

\maketitle

\begin{abstract}


We establish conditions for an exponential rate of forgetting of the initial distribution of nonlinear filters in $V$-norm, path-wise along almost all observation sequences. In contrast to previous works, our results allow for unbounded test functions. The analysis is conducted in an general setup involving nonnegative kernels in a random environment which allows treatment of filters and prediction filters in a single framework. The main result is illustrated on two examples, the first showing that a total variation norm stability result obtained by \citet{Douc2009} can be extended to $V$-norm without any additional assumptions, the second concerning a situation in which  forgetting of the initial condition holds in $V$-norm for the filters, but the $V$-norm of each prediction filter is infinite.

\textit{Keywords:} Nonlinear filtering, Hidden Markov models, random environment, $V$-norm

\end{abstract}

\section{Introduction}\label{sec:intro}

For Polish spaces $\setX$, $\setY$ equipped with their Borel $\sigma$-algebras $\sigX$, $\sigY$, let $\mu$ be a probability measure on $\sigX$ and let $f:\setX\times\sigX\rightarrow[0,1]$ and $g:\setX\times\sigY\rightarrow[0,1]$ be probability kernels. A hidden Markov model (HMM) is a bi-variate process $(X,Y)$ where the signal process $X=(X_n)_{n\in\naturals}$ is a Markov chain with initial distribution $\mu$ and transition kernel $f$, and the observations $Y=(Y_n)_{n\in\naturals}$ are conditionally independent given $X$, with the conditional distribution of $Y_n$ given $X$ being $g(X_n,\cdot)$. The filtering problem is to compute, for each $n$, the conditional distribution of $X_n$ given $Y_0,\ldots,Y_n$.

Suppose that for each $x\in\setX$, $g(x,\cdot)$ admits a density denoted $g(x,y)$  w.r.t. some $\sigma$-finite measure. Then for a probability measure $\lambda$ on $\sigX$, under mild conditions on the density $g$ the following recursion defines a sequence of probability kernels $\Pi_n^{\lambda}:\setY^{\naturals}\times\sigX\to[0,1]$, $n\geq0$,
\begin{align}
\label{eq:filter_defn}
\begin{split}
\Pi_n^{\lambda}(y,A) & :=\frac{\int\ind_A(x)g(x,y_n)f(x^\prime,\dd x)\Pi_{n-1}^{\lambda}(y,\dd x^\prime)}{\int g(x,y_0)f(x^\prime,\dd x)\Pi_{n-1}^{\lambda}(y,\dd x^\prime)},\quad n\geq 1,\\
\Pi_0^{\lambda}(y,A) & :=\frac{\int\ind_A(x)g(x,y_0)\lambda(\dd x)}{\int g(x,y_0)\lambda(dx)},
\end{split}
\end{align}
where $y=(y_0,y_1,\ldots)\in \setY^{\naturals}$. In particular, $\Pi_n^{\mu}(Y,\cdot)$ is a version of the conditional distribution of $X_n$ given $Y_0,\ldots,Y_n$ under the probability model described in the first paragraph of this section. A distribution of the form $\Pi_n^{\lambda}(y,\cdot)$ is called a \emph{filtering distribution}, or simply a \emph{filter}.

The question of stability w.r.t. initial conditions of the filter addresses whether or not $\Pi_n^{\lambda}$ is, in some sense, insensitive to $\lambda$ as $n\to+\infty$. This question has been made precise in a number of ways and answered using a variety of techniques. A full survey of existing results is beyond the scope of this article, the reader is directed to \citet[Chapter 4]{crisan2011oxford} for a collection of recent perspectives. However, what unifies much of the literature on filter stability is that the insensitivity of $\Pi_n^{\lambda}$ to $\lambda$ is described in terms of integrals  w.r.t.  $\Pi_n^{\lambda}(y,\cdot)$ of \emph{bounded} test functions, typically through decay as $n\to+\infty$ of the total variation $\|\Pi_n^{\lambda}(Y,\cdot)-\Pi_n^{\tilde{\lambda}}(y,\cdot)\|_{\text{tv}}$,  where for a signed measure $m$, $\|m\|_{\text{tv}}:=\sup_{|\varphi|\leq 1}|m(\varphi)|$, $m(\varphi):=\int\varphi(x)m(\dd x)$. Studies using the total variation norm include e.g., \citet{kleptsyna2008discrete,Douc2009,douc2010forgetting, van2009stability} and several older and influential works deal with bounded continuous test functions e.g., \citet{ocone1996asymptotic} for filtering in continuous time.

In many applications $\setX$ is $\R^d$ or some other unbounded domain and the motive for computing $\Pi_n^{\lambda}$ is statistical inference for the signal process, e.g. by calculating moments of $\Pi_n^{\lambda}(y,\cdot)$. This situation leads naturally to the question of filter stability for unbounded test functions, which to the knowledge of the authors has gone largely unanswered. The main aim of the present article is to address this gap.

One situation in which unbounded test functions are implicitly considered is the linear-Gaussian case, in which $\setX$, $\setY$ are Cartesian products of $\R$, $x$ is a vector and $f(x,\cdot)=\mathcal{N}(Ax,C)$, $g(x,\cdot)=\mathcal{N}(Bx,D)$ for suitable matrices $A,B,C,D$. If $\lambda$ is also Gaussian, then $\Pi_n^{\lambda}(y,\cdot)$ is too. In this situation the stability of the mean and covariance of $\Pi_n^{\lambda}(y,\cdot)$ w.r.t. initial conditions has been well studied, see  \citet{ocone1996asymptotic} for continuous time and references therein for discrete time, but the proof techniques involved seem very much specific to the linear-Gaussian model structure.

Our aim is to accommodate unbounded test functions whilst allowing more general assumptions on $\lambda$, $f$ and $g$. Our approach builds very directly upon that of \cite{Douc2009}, in turn drawing on techniques of \citet{kleptsyna2008discrete}. Under a collection of assumptions which we discuss in more detail later, \citet[Theorem 1]{Douc2009} established path-wise exponential stability of the form: there exists a strictly positive constant $c$ such that for any two probability measures $\lambda,\tilde{\lambda}$:
\begin{equation}
\limsup_{n\to+\infty} \frac{1}{n}\log \|\Pi_n^{\lambda}(Y,\cdot)-\Pi_n^{\tilde{\lambda}}(Y,\cdot)\|_{\text{tv}} <-c,\quad \mathbb{P}-a.s.,\label{eq:DM_result_front}
\end{equation}
where $\mathbb{P}$ is a probability measure on $\sigY^{\otimes\naturals}$. The conditions of \citet[Theorem 1]{Douc2009} accommodate  $Y$ being a stationary ergodic process under $\mathbb{P}$, and allow for model mis-specification, in the sense that $\mathbb{P}$  need not be the measure on $\sigY^{\otimes\naturals}$ induced by the HMM specified by $\mu,f,g$, or indeed any HMM.

Our main contribution is to establish that under conditions similar to those of \cite{Douc2009}, path-wise exponential convergence as in \eqref{eq:DM_result_front} holds, but with $\|\cdot\|_{\text{tv}}$ replaced by a norm which allows for unbounded test functions. For $V$ an $\R^+$-valued function on $\setX$ such that $\sup_{x\in\setX}V(x)\leq +\infty$, we consider the norm on signed measures $m$, $\|m\|_V:=\sup_{|\varphi |\leq V} | m(\varphi)|$,
as is popular in studies of the ergodic properties of general state-space Markov chains \citep[Chapter 16]{meyntweedie}. Stability w.r.t. initial conditions of the prediction filters  $\overline{\Pi}_{n}^\lambda(y,\cdot):=\int f(x,\cdot)\Pi_{n-1}^\lambda(y,\dd x)$ in $V$-norm was considered by the second author of the present paper in \citep{Whiteley2013}, but under a restrictive condition on the observation sequence $y$ which we discuss in more detail later, after introducing notation and definitions.

The rest of the paper is structured as follows. Our general setup is introduced in Section \ref{sub:defn_and_assumptions}, where we take a slightly abstract perspective on HMM's and filtering, in that  our main results concern certain sequences of measures which arise from the composition of nonnegative kernels driven by an ergodic measure-preserving transform, along the lines considered by \cite{kifer1996perron} in his Perron-Frobenius theorem in random environments. The utility of this general formulation is that it allows us to treat stability of the filters $\Pi_n^\lambda$ and the prediction filters $\overline{\Pi}_n^\lambda$ in a single framework, as described in Section \ref{sub:instances}. The statements of our main results are given in Section \ref{sub:statements}, with comparisons to the assumptions of \cite{Douc2009} and a result of \cite{Whiteley2013}. Verification of the assumptions is illustrated through examples in Section \ref{sub:examples} and we point to potential other applications in Section \ref{sub:concluding_remarks}. The proofs are in Section \ref{sec:proofs}.

\section{Nonnegative kernels in a random environment}\label{sec:generic}

\subsection{Definitions and assumptions}\label{sub:defn_and_assumptions}

We consider a complete probability space $(\Omega,\mathcal{F},\mathbb{P})$   and a measurable space $(\setX,\sigX)$, where $\setX$ is Polish and $\sigX$ is the Borel $\sigma$-algebra on $\setX$. For an integral kernel $R:\Omega\times\setX\times\sigX\to [0,+\infty]$, i.e. for  $(\omega,x)\in\Omega\times\setX$, $R(\omega,x,\cdot)$ is a measure on $\sigX$, and for $A\in\sigX$, $R(\cdot,\cdot,A)$ is measurable w.r.t. $\mathcal{F}\otimes\sigX$,  we shall write interchangeably $R(\omega,x,A)\equiv R^\omega(x,A)$. Similarly for $\nu:\Omega\times\sigX\to[0,+\infty]$,  $\nu(\omega,A)\equiv\nu^\omega(A)$; for $\varphi:\Omega\times\setX\to\mathbb{R}$, $\varphi(\omega,x)\equiv\varphi^\omega(x)$; and $R^\omega\varphi^\omega(x):=\int_{\setX}\varphi^\omega(x^\prime)R^\omega(x,\dd x^\prime)$, $\nu^\omega R^\omega(\cdot):=\int_{\setX}\nu^\omega(\dd x)R^\omega(x,\cdot)$, $\nu^\omega(\varphi^\omega):=\int_{\mathbb{X}}\varphi^\omega(x)\nu^\omega(\dd x)$.

By virtue of our completeness assumption about $(\Omega,\mathcal{F},\mathbb{P})$ and Polish assumption about $\setX$, for any measurable $\varphi:\Omega\times\setX\to\mathbb{R}$ and $A\in\sigX$, the mappings $\omega\mapsto \sup_{x\in A} \varphi(\omega,x)$ and $\omega\mapsto \inf_{x\in A} \varphi(\omega,x)$ are each measurable w.r.t. $\mathcal{F}$ \cite[Corollary 2.13]{Crauel2003}.

We fix a function $V:\setX\to[1,+\infty)$ possibly unbounded (in the sense that we allow $\sup_{x\in\setX }\leq +\infty$), with which we associate the following norms. For $\varphi:\setX\to\R$, $\|\varphi\|_V:=\sup_{x\in\setX}  |\varphi(x)|/V(x)$; for any signed measure $m$ on $\sigX$, $\|m\|_V:=\sup_{\varphi:|\varphi|\leq V}|m(\varphi)|$; and for any two nonnegative integral kernels $R,\widetilde{R}$ on $(\mathbb{X},\mathcal{X})$, $\opnm R - \widetilde{R} \opnm_V:= \sup_{x\in\setX} \|R(x,\cdot)-\widetilde{R}(x,\cdot)\|_V/V(x) $.

Let $\theta:\Omega\rightarrow\Omega$ be a measurable mapping and with $n\in\naturals$, let $\theta^n$ denote the $n$-fold iterate of $\theta$. Then denote:
\[
R_{0}^{\omega}:=Id,\qquad R_{n}^{\omega}:=R^{\omega}R^{\theta\omega}\cdots R^{\theta^{n-1}\omega},\quad n\geq1.
\]
Define $a \vee b :  = \max\{a,b\}$, $a \wedge b := \min\{a,b\}$, $\log^+(x):=\log(1\vee x)$ and $\log^-(x) := - \log (1\wedge x) $. The indicator function on a set $A$ is denoted by $\ind_A$. The set of non-negative integers is denoted $\mathbb{N}$. We adopt the conventions $0/0=+\infty/+\infty=1$.

From henceforth we fix a distinguished nonnegative kernel $Q:\Omega\times\setX\times\sigX\to[0,+\infty]$, such that $Q^\omega(x,\setX)>0$ for all $x\in\setX$, $\P$-a.s.

\begin{defn}
A set $C\in\mathcal{X}$ is a local Doeblin (LD) set for $Q$ if there exist
nonnegative random variables $\epsilon_{C}^{-},\epsilon_{C}^{+}$ on $(\Omega,\mathcal{F})$,  such that $\epsilon_{C}^{-}(\omega)\leq \epsilon_{C}^{+}(\omega)$ for all $\omega$ and both $\epsilon_{C}^{+}$ and $\epsilon_{C}^{-}$ are valued in $(0,+\infty)$ $\P$-a.s.; and
a probability kernel $\mu_{C}:\Omega\times\sigX\to[0,1]$ such that $\mu_{C}^{\omega}(C)=1$ for
all $\omega$ and, for any $(A,x,\omega)\in\sigX\times C\times \Omega$,
\[
\epsilon_{C}^{-}(\omega)\mu_{C}^{\omega}(A\cap C)\leq Q^{\omega}(x,A\cap C)\leq\epsilon_{C}^{+}(\omega)\mu_{C}^{\omega}(A\cap C).
\]

\end{defn}

We shall consider the following assumptions.
\begin{enumerate}[label=(\subscript{A}{\arabic*})]
\item\label{H:theta} $\theta$ preserves $\mathbb{P}$ and is ergodic;

\item\label{H:Upsilon} $\mathbb{E}\left[\log^+ \Upsilon\right]<+\infty$, where $\Upsilon(\omega):=\opnm Q^\omega \opnm_V$, $\omega\in\Omega$;

\item \label{H:Psi}  There exists a set $D\in\sigX$ such that $\mathbb{E}\left[\log^- \Psi\right]<+\infty$, where $\Psi(\omega):=\inf_{x\in D}Q^{\omega}(x,D)$, $\omega\in\Omega$; 


\item\label{H:drift} There exist a set $K\in\mathcal{F}$, a constant $\underline{d}\geq 0$,  and a measurable, unbounded function $W:\setX\rightarrow[0,+\infty)$
such that for any $d\in[\underline{d},+\infty)$,

\begin{enumerate}
\item  $C_{d}:=\{x\in\setX\,:\, W(x)\leq d\}$
is a LD set for $Q$, with
$$
\inf_{\omega \in K}\epsilon_{C_d}^-(\omega)/\epsilon_{C_d}^+(\omega)\in (0,1],\quad \E\big[\log^-\big(\epsilon_{C_d}^-\mu_{C_d}(C_d\cap D)\big)\big]<+\infty,
$$
where $D$ is as in \ref{H:Psi};

\item$V_d:=\sup_{x\in C_{d}}V(x)<+\infty$ and
\[
\frac{Q^{\omega}(V)(x)}{V(x)}\leq\exp\left[ -W(x)\right],\quad\forall(\omega,x)\in K\times C_d^c;
\]
\end{enumerate}

\item \label{H:setK} $\P(K)>2/3$, where $K$ is as in \ref{H:drift}.


\end{enumerate}

Let $\mathcal{M}(D,V)$ be the collection of integral kernels $\nu:\Omega\times\sigX\to[0,+\infty]$, such that for any $A\in\sigX$, the mapping $\omega\mapsto\nu^\omega(A)$ is measurable; for $\P$-almost all $\omega$, $\nu^\omega(\cdot)$ is a probability measure on $\sigX$,  $\nu^\omega(V)<+\infty$ and $\nu^\omega Q^\omega(D)>0$,  where $D$ is as in \ref{H:Psi}.

For any integral kernel $\nu:\Omega\times\sigX\to[0,+\infty]$ and $n\in\naturals$, denote:
\begin{equation}
\eta_{\nu,n}^\omega(A):=\frac{\nu^\omega Q_n^\omega (A)}{\nu^\omega Q_n^\omega(\mathbb{X})},\quad(\omega,A)\in \Omega\times\mathcal{X}.\label{eq:eta_defn}
\end{equation}

The following preliminary lemma addresses some basic regularity properties of $Q$ and $\eta_{\nu,n}$.
\begin{lemma}\label{lemma:eta_prelim}
Assume \ref{H:theta}, \ref{H:Upsilon}, \ref{H:Psi} and let $\nu\in\mathcal{M}(D,V)$. Then there exists $\bar{\Omega}\in\mathcal{F}$ with $\P(\bar{\Omega})=1$ such that the following hold for all $\omega\in\bar{\Omega}$. For all $n\in\naturals$, $\opnm Q_n^\omega \opnm_V<+\infty$, $\prod_{k=0}^n\Upsilon(\theta^k\omega)<+\infty$, $\prod_{k=0}^n\Psi(\theta^k\omega)>0$, and for all $x\in\setX$, $Q_n^{\omega}(x,\setX)>0$. Also $\eta_{\nu,n}^\omega(\cdot)$ is a probability measure on $\sigX$ and $\eta_{\nu,n}^\omega(V)<+\infty$ for all $n\in\naturals$, and $\eta_{\nu,n}^\omega(D)>0$ for all $n\geq1$.
\end{lemma}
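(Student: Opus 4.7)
The plan is to build $\bar\Omega$ as a countable intersection of $\P$-full-measure events, each securing one piece of the claim along the forward orbit $\{\theta^k\omega\}_{k\geq 0}$, and then derive the per-$n$ statements by elementary induction. Since $\E[\log^+ \Upsilon]<\infty$ and $\E[\log^- \Psi]<\infty$ (by \ref{H:Upsilon} and \ref{H:Psi}), $\Upsilon$ and $\Psi$ are respectively finite and positive $\P$-a.s. Invariance of $\P$ under $\theta$ from \ref{H:theta} implies that for each fixed $k$ the event $\{\Upsilon\circ\theta^k<\infty,\ \Psi\circ\theta^k>0\}$ is also of full measure, so countable intersection produces $\Omega_1\in\mathcal{F}$ on which $\prod_{k=0}^n \Upsilon(\theta^k\omega)$ is finite and $\prod_{k=0}^n \Psi(\theta^k\omega)$ is positive for every $n$. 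Submultiplicativity of the $V$-operator norm (which follows from $\|R(x,\cdot)\|_V=RV(x)$ for nonnegative $R$ and iterated application of $RV\leq \opnm R\opnm_V V$) then gives $\opnm Q_n^\omega\opnm_V \leq \prod_{k=0}^{n-1}\Upsilon(\theta^k\omega)<\infty$.

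Next, $Q^\omega(x,\setX)>0$ for all $x\in\setX$ on a $\P$-full-measure set by the standing assumption on $Q$; intersecting its $\theta^k$-preimages over $k\in\naturals$ yields a second full-measure event $\Omega_2$ on which the same positivity holds for every $\theta^k\omega$. On $\Omega_2$ the identity
\[
Q_n^\omega(x,\setX) = \int Q^{\theta^{n-1}\omega}(x',\setX)\,Q_{n-1}^\omega(x,\dd x')
\]
permits induction on $n$: the integrand is strictly positive everywhere and, by the inductive hypothesis, $Q_{n-1}^\omega(x,\cdot)$ is a non-zero positive measure, so the integral is strictly positive (a strictly positive measurable function integrated against a non-zero positive measure is strictly positive).

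Let $\Omega_3$ be the full-measure set on which $\nu^\omega$ is a probability measure with $\nu^\omega(V)<\infty$ and $\nu^\omega Q^\omega(D)>0$, guaranteed by $\nu\in\mathcal{M}(D,V)$, and set $\bar\Omega:=\Omega_1\cap\Omega_2\cap\Omega_3$. The denominator in \eqref{eq:eta_defn} is finite because $\nu^\omega Q_n^\omega(\setX)\leq \nu^\omega Q_n^\omega(V)\leq \opnm Q_n^\omega\opnm_V\,\nu^\omega(V)<\infty$. For strict positivity of the denominator and for $\eta^\omega_{\nu,n}(D)>0$, I first establish by induction on $k$ the minorisation
\[
Q_k^\omega(x,D) \geq \prod_{j=0}^{k-1}\Psi(\theta^j\omega),\qquad x\in D,
\]
each inductive step restricting the intermediate integral to $D$ and invoking the definition of $\Psi$. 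Writing $\nu^\omega Q_n^\omega = (\nu^\omega Q^\omega)\,Q_{n-1}^{\theta\omega}$ and restricting the outer integral to $D$ then yields
\[
\nu^\omega Q_n^\omega(D) \geq \nu^\omega Q^\omega(D)\prod_{j=1}^{n-1}\Psi(\theta^j\omega) > 0,\qquad n\geq 1.
\]
Combined with the finite upper bound on the denominator, this makes $\eta_{\nu,n}^\omega$ a well-defined probability measure with $\eta_{\nu,n}^\omega(V)<\infty$ for all $n$ and $\eta_{\nu,n}^\omega(D)>0$ for all $n\geq 1$. The main obstacle is purely bookkeeping: aligning the shifts of $\theta$ in the minorisation induction, and ensuring that the a.s. events over $k$ are intersected so that every conclusion holds simultaneously for all $n\in\naturals$.
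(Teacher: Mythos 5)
Your proof is correct and follows essentially the same route as the paper's: sub-multiplicativity of $\opnm\cdot\opnm_V$ combined with a.s. finiteness of $\Upsilon$ and positivity of $\Psi$ propagated along the orbit via measure preservation, plus the lower bound $\nu^\omega Q_n^\omega(D)\geq \nu^\omega Q^\omega(D)\prod_j\Psi(\theta^j\omega)$ obtained by restricting intermediate integrals to $D$. You merely spell out the inductions and the countable intersection of full-measure events that the paper leaves implicit.
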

\begin{proof}
By the sub-multiplicative property of $\opnm \cdot \opnm_V$, we have $\opnm Q_n^\omega \opnm_V\leq \prod_{k=0}^{n-1} \opnm Q^{\theta^k\omega} \opnm_V = \prod_{k=0}^{n-1}\Upsilon(\theta^k\omega)$.  For $\P$-almost all $\omega$, $\Upsilon(\omega)<+\infty$ by \ref{H:Upsilon},  $\Psi(\omega)>0$ by \ref{H:Psi}, and $Q^\omega(x,\setX)>0$ for all $x\in\setX$  by definition. Combining these observations with the measure preservation part of \ref{H:theta} gives the first four inequalities in the statement. For $n\in\naturals$, $\|\nu^\omega Q_n^\omega\|_V\leq\|\nu^\omega \|_V\opnm Q_n^\omega \opnm_V<+\infty$ and for $n\geq1$, $\nu^\omega Q_n^\omega (\mathbb{X})\geq \nu^\omega Q^\omega(D)\prod_{k=0}^{n-1}1\wedge\Psi(\theta^k\omega)>0$, $\P$-a.s. Putting these facts together with \eqref{eq:eta_defn} completes the proof.
\end{proof}

\subsection{Instances of the general setup}\label{sub:instances}

Let $(\setY,\sigY)$, $f$, $g$, $\lambda$, $\Pi_n^\lambda$ and $\overline{\Pi}_n^\lambda$ be as in Section \ref{sec:intro} and let $\P$ be some probability measure on $\sigY^{\otimes\naturals}$. Take $\Omega=\setY^\naturals$. We note that the requirement of Section \ref{sub:defn_and_assumptions} to have a complete probability space can always be satisfied by taking $(\Omega,\mathcal{F},\P)$ to be the unique completion of $(\Omega,\sigY^{\otimes\naturals},\P)$ in the sense of \citet[p.39 and problem 3.5, p.43]{Bill2ndEd}, where here and in the examples of Section \ref{sub:examples} we abuse notation slightly in using the symbol $\P$ to represent both the original probability measure on $\sigY^{\otimes\naturals}$ and its extension to $\mathcal{F}$.

Regard $Y(\omega)=(Y_0(\omega),Y_1(\omega),\ldots)$ as the coordinate process on $(\Omega,\mathcal{F})$. Take $\theta$ as the shift operator, $Y(\theta\omega) = (Y_1(\omega),Y_2(\omega),\ldots)$. We then observe the following from \eqref{eq:filter_defn} and \eqref{eq:eta_defn}.

\begin{description}
  \item[Filters] If  one takes
\begin{equation}
\nu^\omega(\dd x)=\frac{g(x,Y_0(\omega))\lambda(\dd x)}{\int g(x^\prime,Y_0(\omega))\lambda(\dd x^\prime)},\quad Q^\omega(x,\dd x^\prime) = f(x,\dd x^\prime) g(x^\prime,Y_1(\omega)),\label{eq:filters}
\end{equation}
then  $\eta_{\nu,n}^\omega(\cdot) \equiv \Pi_n^\lambda(Y(\omega),\cdot)$.
  \item[Prediction filters] If one takes
\begin{equation}
\nu^\omega(\cdot)=\lambda(\cdot),\quad Q^\omega(x,\dd x^\prime) = g(x,Y_0(\omega))f(x,\dd x^\prime),\label{eq:pred_filters}
\end{equation}
then $\eta_{\nu,n}^\omega(\cdot) \equiv \overline{\Pi}_n^\lambda(Y(\omega),\cdot)$.
\end{description}

\subsection{Statements of the main results}\label{sub:statements}

\begin{thm}\label{thm:conv_to_zero}
Assume \ref{H:theta}-\ref{H:setK}. Then there exists a $\rho\in(0,1)$ such that, for all $\nu,\,\tilde{\nu}\in\mathcal{M}(D,V)$,
$$
\lim_{n\rightarrow+\infty} \rho^{-n}\|\eta_{\nu,n}^{\omega}-\eta_{\tilde{\nu},n}^{\omega}\|_V=0,\quad\ass
$$
\end{thm}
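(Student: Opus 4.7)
My approach would adapt the Dobrushin-contraction-via-local-Doeblin argument of \citet{Douc2009}, promoting it from total variation to $V$-norm using the multiplicative drift \ref{H:drift}(b), in the same spirit as the prediction-filter analysis of \cite{Whiteley2013}. The first step is a one-step $V$-norm contraction for $Q^\omega$ when $\omega\in K$. Picking $d\geq\underline d$ and decomposing $\mu Q^\omega-\mu'Q^\omega$ according to the partition $\{C_d, C_d^c\}$, the two-sided LD bound of \ref{H:drift}(a) gives a classical Dobrushin contraction on $C_d$ with coefficient $1-\epsilon_{C_d}^-(\omega)/\epsilon_{C_d}^+(\omega)$, whose $V$-norm cost is controlled by $V_d=\sup_{C_d}V<+\infty$. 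Meanwhile, the drift $Q^\omega V(x)/V(x)\leq e^{-W(x)}\leq e^{-d}$ on $C_d^c$ damps the $V$-mass of the tail contribution by $e^{-d}$. Combining yields $\|\mu Q^\omega-\mu'Q^\omega\|_V\leq\tau(\omega;d)\|\mu-\mu'\|_V$ (for pairs of measures with a uniform $V$-moment bound) for a random coefficient $\tau(\omega;d)$ that can be made as small as desired on $K$ by choosing $d$ large.

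The second step is to iterate and exploit ergodicity. By \ref{H:setK}, $\P(K)>2/3$; on $K$, $\log\tau(\omega;d)\leq -c_d<0$ for $d$ sufficiently large, while on $K^c$, $\log^+\tau(\omega;d)$ is $\P$-integrable thanks to \ref{H:Upsilon}. Hence $\E[\log\tau(\cdot;d)]<0$, and Birkhoff's theorem under \ref{H:theta} gives $\P$-a.s.\ convergence of $n^{-1}\sum_{k=0}^{n-1}\log\tau(\theta^k\omega;d)$ to this negative limit, so $\prod_{k=0}^{n-1}\tau(\theta^k\omega;d)$ decays at some deterministic exponential rate $\rho\in(0,1)$. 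To pass from this unnormalised contraction to the normalised filters, I would use Lemma \ref{lemma:eta_prelim} together with \ref{H:drift}(b) to establish $\P$-a.s.\ finiteness of $\sup_n\eta_{\nu,n}^\omega(V)$ (drift plus Birkhoff controls the accumulated $V$-mass), and use \ref{H:Psi} with \ref{H:drift}(a) to lower-bound the normalising constant $\nu^\omega Q_n^\omega(\setX)$ along typical $\omega$-paths, since $\eta_{\nu,k}^\omega$ charges $D$ infinitely often by ergodicity. Together these yield the claimed $\rho^{-n}$-rate decay of $\|\eta_{\nu,n}^\omega-\eta_{\tilde\nu,n}^\omega\|_V$.

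The main obstacle is the definition and control of the $V$-Dobrushin coefficient. The classical total-variation coefficient is a supremum over pairs of Dirac masses, but these need not have finite $V$-moment when $V$ is unbounded, so one must instead work with pairs of probability measures subject to a uniform $V$-integrability constraint, and verify that this class is preserved under the iteration $\mu\mapsto\mu Q^\omega/\mu Q^\omega(\setX)$. The calibration of $d$ is also delicate: larger $d$ strengthens the tail damping $e^{-d}$ but worsens the effective Doeblin contribution through $V_d$; threading these two effects together so that $\E[\log\tau(\cdot;d)]<0$ can be enforced simultaneously with the measure-class preservation is the technical heart of the argument. Controlling the random normalising constant $\nu^\omega Q_n^\omega(\setX)$ $\P$-a.s.\ and uniformly in $n$ is the other pinch-point, and is precisely where the integrability condition on $\log^-(\epsilon_{C_d}^-\mu_{C_d}(C_d\cap D))$ in \ref{H:drift}(a) pays off.
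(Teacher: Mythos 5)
Your overall ingredients (local Doeblin coupling on $C_d$, drift damping on $C_d^c$, Birkhoff's theorem, and the $\log^-$ integrability to control normalising constants) are the right ones, but the central lemma you propose --- a one-step $V$-norm contraction $\|\mu Q^\omega-\mu'Q^\omega\|_V\leq\tau(\omega;d)\|\mu-\mu'\|_V$ with $\mathbb{E}[\log\tau(\cdot;d)]<0$ --- does not follow from the hypotheses and is not how the paper proceeds. The Doeblin bound in \ref{H:drift}(a) is only \emph{local}: it controls $Q^{\omega}(x,A\cap C_d)$ for $x\in C_d$ and says nothing about the mass sent from $C_d$ into $C_d^c$, so coupling is achieved only when the chain lies in $C_d$ at two \emph{consecutive} times; a single application of $Q^\omega$ to a pair of measures charging $C_d^c$ need not contract at all, and on $K^c$ there is no smallness whatsoever, only $\mathbb{E}[\log^+\Upsilon]<+\infty$. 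The paper instead works on the product space $\bar{\setX}=\setX^2$ with the tensorised kernel $\bar{Q}^\omega$ and the antisymmetric form $\Delta^{\omega}_{\nu,\tilde{\nu},n}(\psi_1,\psi_2)=\nu Q_{n}^{\omega}(\psi_1)\tilde{\nu} Q_{n}^{\omega}(\psi_2)-\nu Q_{n}^{\omega}(\psi_2)\tilde{\nu} Q_{n}^{\omega}(\psi_1)$ (Proposition \ref{prop:forget}); this linearises the normalisation problem you defer to the end, and the contraction factor $\rho_{C_d}=\sup_{\omega\in K}\{1-(\epsilon_{C_d}^-(\omega)/\epsilon_{C_d}^+(\omega))^2\}$ accrues one power per index $j$ with $\ind_{\bar{C}_d\times\bar{C}_d}(\bar{x}_{j},\bar{x}_{j+1})\ind_K(\theta^j\omega)=1$. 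The counting argument borrowed from \citet{Douc2012} then splits according to whether the product trajectory visits $\bar{C}_d^c$ more or fewer than $(n-\lfloor n\beta\rfloor)/2$ times, the latter event being damped by $e^{-d\lfloor n(\gamma^+-\beta)\rfloor/2}$ via the drift. Your sketch has no mechanism that recovers either the two-consecutive-times requirement or the bad-trajectory term.

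A second gap: you assert $\sup_n\eta^{\omega}_{\nu,n}(V)<+\infty$ $\P$-a.s. This is stronger than what the hypotheses yield and also stronger than what is needed. Proposition \ref{prop:Vnorm} --- the paper's main technical contribution --- only establishes subexponential growth, namely $\beta^n\|\eta^{\omega}_{\nu,n}\|_V\to0$ for every $\beta\in(0,1)$, via the auxiliary normalised kernels $S^{\omega}_{k,n}$ and a Foster--Lyapunov recursion for $v^{\omega}_{k,n}=V/h^{\omega}_{k,n}$; this suffices because in the final bound the product $\|\eta^{\omega}_{\nu,n}\|_V\|\eta^{\omega}_{\tilde{\nu},n}\|_V$ is multiplied by the exponentially small factor $\rho_{C_d}^{\lfloor n(\beta-\gamma^-)\rfloor}$. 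Your calibration worry about trading $d$ against $V_d$ is also moot in the paper's scheme: $d$ is chosen only at the very end, to make $e^{-d(\gamma^+-\beta)/2+2l}<1$ where $l$ is the Birkhoff limit of $n^{-1}\sum_{k=0}^{n-1}\log Z(\theta^k\omega)$, while $V_d$ enters only the proof of Proposition \ref{prop:Vnorm}, whose conclusion holds for any fixed $d\geq\underline{d}$ irrespective of the size of $V_d$.
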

The main ingredients in the proof of Theorem \ref{thm:conv_to_zero} are the following two propositions.  Proof of Proposition \ref{prop:Vnorm} is the main technical contribution of the paper and is given  in Section \ref{sec:proof_of_vnorm} through a sequence of lemmas.  The proof of Proposition \ref{prop:forget},  given in Section \ref{sec:proof_of_forget},  follows quite closely some arguments of \citet[][Proof of Proposition 5]{Douc2012}, with suitable modifications to accommodate the $V$-norm. Lastly, we note although we are primarily interested in the case $\sup_x V(x)=+\infty$, none of our results actually require that condition to hold, and when e.g., $V(x)=1$ the $V$-norm on measures reduces to the tv-norm and the claim of Proposition \ref{prop:Vnorm} is trivial.

\begin{prop}\label{prop:Vnorm}
Assume \ref{H:theta}-\ref{H:drift}, $\P(K)>0$, with $K$ as in \ref{H:drift}. Then,   for any $\beta\in (0,1)$ and $\nu\in\mathcal{M}(D,V)$,
$$
\lim_{n\rightarrow+\infty}\beta^n \|\eta_{\nu,n}^{\omega}\|_{V}=0,\quad\ass
$$
\end{prop}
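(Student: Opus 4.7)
Since each $\eta_{\nu,n}^\omega$ is a probability measure by Lemma~\ref{lemma:eta_prelim}, $\|\eta_{\nu,n}^\omega\|_V=\eta_{\nu,n}^\omega(V)$, and the claim $\beta^n\eta_{\nu,n}^\omega(V)\to 0$ for every $\beta\in(0,1)$ is equivalent to the subexponential bound $\limsup_n n^{-1}\log\eta_{\nu,n}^\omega(V)\leq 0$, $\P$-a.s. The plan is to prove this by working with the ratio representation
\[
\eta_{\nu,n}^\omega(V)=\frac{\nu^\omega Q_n^\omega(V)}{\nu^\omega Q_n^\omega(\setX)}
\]
and controlling the exponential rates of the numerator and denominator separately, using Birkhoff's ergodic theorem on $(\Omega,\mathcal{F},\P,\theta)$ via \ref{H:theta} throughout.

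For the denominator, the observation is that mass in $D$ propagates forward under $Q$: \ref{H:Psi} gives $Q^\omega(x,D)\geq\Psi(\omega)$ for all $x\in D$ on a set of full $\P$-measure, whence
\[
\nu^\omega Q_n^\omega(\setX)\geq\nu^\omega Q_n^\omega(D)\geq\nu^\omega Q^\omega(D)\prod_{k=1}^{n-1}\Psi(\theta^k\omega),
\]
with $\nu^\omega Q^\omega(D)>0$ since $\nu\in\mathcal{M}(D,V)$. Applying Birkhoff to the integrable sequence $\log^-\Psi\circ\theta^k$ (integrability is \ref{H:Psi}) then yields $\liminf_n n^{-1}\log\nu^\omega Q_n^\omega(\setX)\geq-\E[\log^-\Psi]$, $\P$-a.s. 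For the numerator, I would iterate the drift \ref{H:drift}(b): for $\omega\in K$, the inequality $V(x)e^{-W(x)}\leq e^{-d}V(x)$ on $C_d^c$ combined with $V\leq V_d$ on $C_d$ and $Q^\omega V\leq\Upsilon(\omega)V$ gives $Q^\omega V(x)\leq e^{-d}V(x)+V_d\Upsilon(\omega)\mathbf{1}_{C_d}(x)$; for $\omega\notin K$ only the crude bound $Q^\omega V\leq\Upsilon(\omega)V$ is available. These estimates deliver a linear recursion
\[
\nu^\omega Q_{n+1}^\omega(V)\leq A_n(\omega)\,\nu^\omega Q_n^\omega(V)+B_n(\omega)\,\nu^\omega Q_n^\omega(\setX),
\]
with $A_n=e^{-d}$ and $B_n=V_d\Upsilon(\theta^n\omega)$ on $\{\theta^n\omega\in K\}$, and $A_n=\Upsilon(\theta^n\omega)$, $B_n=0$ elsewhere. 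Unrolling and dividing by $\nu^\omega Q_n^\omega(\setX)$ decomposes $\eta_{\nu,n}^\omega(V)$ into a ``free-evolution'' term proportional to $\prod_{k<n}A_k$ and an ``injection'' sum $\sum_{k<n}B_k\prod_{j>k}A_j\cdot\nu^\omega Q_k^\omega(\setX)/\nu^\omega Q_n^\omega(\setX)$.

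The hard part will be taming the injection sum: it carries the possibly large constant $V_d$ and the random amplifications $\Upsilon(\theta^k\omega)$, and must still be shown subexponential in $n$ after normalisation by $\nu^\omega Q_n^\omega(\setX)$. The leverage is that $d\in[\underline{d},+\infty)$ is a free parameter, and by Birkhoff applied to $\mathbf{1}_K\circ\theta^k$ together with $\P(K)>0$, the cumulative contraction $(e^{-d})^{T_n(\omega)}$ with $T_n(\omega):=\#\{k<n:\theta^k\omega\in K\}\sim n\P(K)$ can be made to dominate both the off-$K$ amplifications $\prod_{k:\theta^k\omega\notin K}\Upsilon(\theta^k\omega)$ (whose $n^{-1}\log$ is bounded above by $\E[\log^+\Upsilon]<+\infty$ via \ref{H:Upsilon}) and the denominator decay rate $\E[\log^-\Psi]$ from \ref{H:Psi}. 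By choosing $d$ sufficiently large and, for each $k$, matching the ergodic rates of the free-evolution factor $\prod_{j>k}A_j$ and the ratio $\nu^\omega Q_k^\omega(\setX)/\nu^\omega Q_n^\omega(\setX)$ against the denominator lower bound, one should obtain $\limsup_n n^{-1}\log\eta_{\nu,n}^\omega(V)\leq\varepsilon$ for arbitrary $\varepsilon>0$, hence $\leq 0$; this careful bookkeeping of several interrelated Birkhoff averages is what I expect to require the ``sequence of lemmas'' referenced in the paper.
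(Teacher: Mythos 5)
Your architecture (a drift recursion for the unnormalized measures, a lower bound on the normalizing constant via \ref{H:Psi}, and Birkhoff bookkeeping with $d$ taken large) runs parallel to the paper's, and your treatment of the denominator and of the ``free-evolution'' term is sound --- the latter matches the paper's handling of \eqref{eq:first_term_to_zero}. But there is a genuine gap at exactly the point you flag as the hard part, and it is not one that more careful bookkeeping of the Birkhoff averages you list can close. The $k$-th injection term, after normalisation, is $B_k\,\bigl(\prod_{j>k}A_j\bigr)\,\nu^\omega Q_k^\omega(\setX)/\nu^\omega Q_n^\omega(\setX)$, and for $k$ near $n$ (take $k=n-1$) the contraction $\prod_{j>k}A_j$ is an empty product, so no choice of $d$ helps: everything rests on the ratio $\nu^\omega Q_{n-1}^\omega(\setX)/\nu^\omega Q_n^\omega(\setX)=1/\eta^\omega_{\nu,n-1}Q^{\theta^{n-1}\omega}(\setX)$, and more generally on a subexponential lower bound for $\eta^\omega_{\nu,k}Q_{n-k}^{\theta^k\omega}(\setX)$. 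The only route to such a bound from \ref{H:Psi} and \ref{H:Upsilon} alone passes through $\eta^\omega_{\nu,k}(D)\geq \nu^\omega Q^\omega(D)\prod_j\Psi(\theta^j\omega)\big/\bigl(\nu^\omega(V)\prod_j\Upsilon(\theta^j\omega)\bigr)$, which decays like $e^{-kl}$ with $l=\E[\log^+\Upsilon]+\E[\log^-\Psi]$, i.e.\ genuinely exponentially in general. Equivalently, if you bound $\nu^\omega Q_k^\omega(\setX)\leq\nu^\omega(V)\prod_{j<k}\Upsilon(\theta^j\omega)$ and $\nu^\omega Q_n^\omega(\setX)\geq\nu^\omega Q^\omega(D)\prod_{j<n}(1\wedge\Psi(\theta^j\omega))$, the near-diagonal terms of your sum are of order $e^{nl}$, which $\beta^n$ cannot kill for $\beta$ close to $1$, and these terms carry no factor of $e^{-d}$ to be amplified.

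The missing ingredient is part a) of \ref{H:drift}, which your proposal never invokes. The paper's Lemma \ref{lem:Skernel} uses the local Doeblin minorization on $C_d$ to show that from any $x\in C_d$ at time $k-1$ the kernel deposits mass at least $\epsilon_{C_d}^-(\theta^{k-1}\omega)\mu_{C_d}^{\theta^{k-1}\omega}(C_d\cap D)$ into $D$, whence the future survival $Q^{\theta^{k-1}\omega}_{n-k+1}(x,\setX)$ is at least $T_d(\theta^{k-1}\omega)\prod_{i=k}^{n-1}\Psi(\theta^i\omega)$ with $T_d:=1\wedge\epsilon_{C_d}^-\mu_{C_d}(C_d\cap D)$; this is what replaces your uncontrolled ratio. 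The integrability $\E\bigl[\log^-\bigl(\epsilon_{C_d}^-\mu_{C_d}(C_d\cap D)\bigr)\bigr]<+\infty$ demanded in \ref{H:drift}a), fed into Lemma \ref{lem:lower_bounded}, then yields $\inf_n\tilde{\beta}^{-n}T_d(\theta^n\omega)>0$ a.s.\ for every $\tilde{\beta}<1$ (equation \eqref{eq:T_d_inf}), which is precisely the subexponential control of the near-diagonal injection terms your outline lacks; the paper explicitly identifies this as the key place where its assumptions are strengthened relative to \citet{Douc2009}. To complete your argument you would need to reroute the bound on $\nu^\omega Q_k^\omega(\setX)/\nu^\omega Q_n^\omega(\setX)$ through the LD set in this way, which in substance reproduces Lemmas \ref{lem:Skernel}--\ref{lem:lower_bounded}.
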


\begin{prop}\label{prop:forget}
Assume \ref{H:theta}-\ref{H:drift}, let $\nu,\tilde{\nu}$ be two members of $\mathcal{M}(D,V)$, and let $\gamma^-,\gamma^+,\beta$ be constants such that $0\leq\gamma^-<\gamma^+\leq 1$ and $\beta\in (\gamma^-,\gamma^+)$. Fix any $d\in [\underline{d},+\infty)$. Then for $\P$-almost any $\omega$, if $n^{-1} I_{0,n-1}^{\omega}\geq  ( 1-\gamma^-)\vee (1+\gamma^+)/2$,  then:
\begin{multline}\label{eq:lem1}
\|\eta^{\omega}_{\nu,n}-\eta^{\omega}_{\tilde{\nu},n}\|_V
\\
\leq 2\rho_{C_d}^{\lfloor n (\beta-\gamma^-)\rfloor}\|\eta^{\omega}_{\nu,n}\|_V\|\eta^{\omega}_{\tilde{\nu},n}\|_V+2\frac{\nu^{\omega}(V)}{\nu^\omega Q^\omega(D)}\frac{\tilde{\nu}^{\omega}(V)}{\tilde{\nu}^\omega Q^\omega(D)}e^{-d  \lfloor n(\gamma^+-\beta)\rfloor/2}\prod_{i=0}^{n-1}Z(\theta^i\omega)^2
\end{multline}
where  $\rho_{C_d}:=\sup_{\omega\in K}\big\{1-\big(\epsilon_{C_d}^-(\omega)/\epsilon_{C_d}^+(\omega)\big)^2\big\}\in [0,1)$ and $Z(\omega):=\dfrac{1\vee\Upsilon(\omega)}{1\wedge\Psi(\omega)}$.
\end{prop}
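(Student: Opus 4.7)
The plan is to follow the argument of \citet[Proof of Proposition 5]{Douc2012} closely, replacing the total-variation bounds used there by $V$-norm bounds. The three ingredients are tied to the hypotheses of the proposition: a Dobrushin-type contraction of $V$-integrated differences on $C_d$ (coming from the local Doeblin condition \ref{H:drift}(a)); geometric decay of $Q^\omega V/V$ by a factor $e^{-d}$ on $C_d^c$ when $\omega\in K$ (coming from the drift inequality in \ref{H:drift}(b), since $W(x)>d$ on $C_d^c$); and the uniform lower bounds $\nu^\omega Q^\omega(D)>0$ and $\tilde\nu^\omega Q^\omega(D)>0$ coming from $\nu,\tilde\nu\in\mathcal{M}(D,V)$ together with \ref{H:Psi}, which will be used to route normalising constants through the set $D$.

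First I would expand the iterated kernel $Q_n^\omega$ by splitting each one-step kernel as $Q^{\theta^k\omega}(x,\cdot)=Q^{\theta^k\omega}(x,\cdot\cap C_d)+Q^{\theta^k\omega}(x,\cdot\cap C_d^c)$, so that contributions to $\nu^\omega Q_n^\omega$ and $\tilde\nu^\omega Q_n^\omega$ are indexed by $C_d/C_d^c$ visit patterns. Among these I would isolate the patterns whose terminal block of length $\lfloor n(\beta-\gamma^-)\rfloor$ lies entirely in $C_d$ at times $k$ with $\theta^k\omega\in K$, together with the patterns in which at least $\lfloor n(\gamma^+-\beta)\rfloor$ good steps lie in $C_d^c$. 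The frequency condition $n^{-1}I_{0,n-1}^\omega\ge(1-\gamma^-)\vee(1+\gamma^+)/2$ is calibrated so that every pattern falls in at least one of these two groups. On the Doeblin branch each step contributes a factor $1-\big(\epsilon_{C_d}^-(\theta^k\omega)/\epsilon_{C_d}^+(\theta^k\omega)\big)^2\le\rho_{C_d}$, and after coupling the residual $V$-mass at the endpoint is absorbed into $\|\eta^\omega_{\nu,n}\|_V\|\eta^\omega_{\tilde\nu,n}\|_V$, reproducing the first term of \eqref{eq:lem1}. On the drift branch, iterating $Q^{\theta^k\omega}V(x)\le e^{-W(x)}V(x)\le e^{-d}V(x)$ on $C_d^c\cap K$ produces the decay $e^{-d\lfloor n(\gamma^+-\beta)\rfloor/2}$, while steps outside $K$ are handled by the crude bound $Q^{\theta^k\omega}V(x)/V(x)\le\Upsilon(\theta^k\omega)\le Z(\theta^k\omega)$, accounting for the factor $\prod_{i=0}^{n-1}Z(\theta^i\omega)^2$.

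Finally I would combine the two branches by the triangle inequality for $\|\cdot\|_V$ and normalise via $\nu^\omega Q_n^\omega(\setX)\ge\nu^\omega Q^\omega(D)\prod_{k=1}^{n-1}(1\wedge\Psi(\theta^k\omega))$ and its analogue for $\tilde\nu$, which introduces the prefactors $\nu^\omega(V)/\nu^\omega Q^\omega(D)$ and $\tilde\nu^\omega(V)/\tilde\nu^\omega Q^\omega(D)$, completing \eqref{eq:lem1}. The main obstacle is the combinatorial bookkeeping alluded to above: one has to make precise how the frequency hypothesis on $I_{0,n-1}^\omega$ forces, for every $C_d/C_d^c$ pattern, either at least $\lfloor n(\beta-\gamma^-)\rfloor$ Doeblin opportunities in the terminal block or at least $\lfloor n(\gamma^+-\beta)\rfloor$ drift opportunities earlier, and to keep track of which kernel steps are absorbed by each bound so that the final exponents match those in \eqref{eq:lem1}. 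A secondary subtlety, absent in \citet{Douc2012}'s total-variation analysis, is that the $V$-norm requires carrying residual $V$-integrated mass through the Doeblin coupling, which is the origin of the $\|\eta^\omega_{\nu,n}\|_V\|\eta^\omega_{\tilde\nu,n}\|_V$ factor in the first term.
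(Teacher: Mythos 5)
Your proposal matches the paper's proof in all essentials: the paper likewise tensorizes onto $\setX^2$ with $\bar{V}=V\otimes V$ and $\bar{Q}^\omega=Q^\omega\otimes Q^\omega$, follows \citet[Proof of Proposition 5]{Douc2012} after splitting $|\varphi|\le V$ into positive and negative parts and using $|\psi\otimes\ind_{\setX}-\ind_{\setX}\otimes\psi|\le\bar{V}$, invokes the same LD/drift dichotomy to produce the two terms of \eqref{eq:lem1}, and normalises through $D$ via $\nu^\omega Q^\omega(D)\,\tilde{\nu}^\omega Q^\omega(D)\prod_i (1\wedge\Psi(\theta^i\omega))^2$ exactly as you describe. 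The one imprecision is your phrasing of the combinatorial step as isolating a \emph{terminal} block lying in $C_d$: the dichotomy actually used (see \eqref{eq:rho_bound}) is that either the number of consecutive-pair visits to $C_d\times C_d$ at $K$-times \emph{anywhere along the path} is at least $\lfloor n(\beta-\gamma^-)\rfloor$, or the total occupation of $(C_d\times C_d)^c$ is at least $(n-\lfloor n\beta\rfloor)/2$ --- the terminal-block version would be false as stated, though this does not affect the viability of your plan since you defer the bookkeeping to the cited argument.
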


We next describe how our assumptions compare to those of \citet[Theorem 1]{Douc2009}, who established a result in t.v.-norm of the form \eqref{eq:DM_result_front}, and how our Theorem \ref{thm:conv_to_zero} differs to a result of \citet{Whiteley2013} which places restrictive conditions on the observation sequence. Due to the technical nature of the assumptions and variations in notation, an exhaustive comparison would be very lengthy and tedious, so we just focus on some key issues.

\subsubsection*{Comparison with \citet{Douc2009}}

 Although \cite{Douc2009} addressed stability of the filtering distributions,  comparison of assumptions is most notationally direct in the setting \eqref{eq:pred_filters}; all assertions in the remainder of Section \ref{sub:statements} are to be understood in that context.

 The main feature of our assumptions which is stronger than those of \citet[Theorem 1]{Douc2009},  is that in \ref{H:drift}a) we require $C_d$ to be an LD set satisfying the integrability condition $\E\big[\log^-\big(\epsilon_{C_d}^-\mu_{C_d}(C_d\cap D)\big)\big]<+\infty$ \emph{for all} $d\in[\underline{d},+\infty)$. This is in contrast to \citet[Theorem 1, eq. (14)]{Douc2009}, which requires that a similar condition is satisfied for only \emph{some} LD set. The key place in which we use this integrability condition is in the proof of Proposition \ref{prop:Vnorm}, in particular see equation \eqref{eq:T_d_inf} below, where ultimately it helps us establish that for $\P$-almost all $\omega$, $\eta_{\nu,n}^\omega(V)$ cannot grow ``too-quickly'' as $n\to+\infty$.

 Otherwise, our assumptions are very similar to those of \citet[Theorem 1]{Douc2009}. We note that we have taken $Q^\omega(x,\setX)>0$ for all $x$, $\P$-a.s. by definition i.e. $g(x,Y_0(\omega))>0$ for all $x$, $\P$-a.s., which is essentially the same as \citet[p.139, condition (H1)]{Douc2009}.  Part b) of \ref{H:drift} is very similar to \citet[p.139, condition (H2)]{Douc2009}. We note in passing that similar conditions have appeared outside of the context of filtering, in the spectral theory of nonnegative kernels, see \cite{whiteley2012linear} and references therein.

  We note that \ref{H:theta} amounts to saying that the observation process $(Y_n)_{n\in\naturals}$ is stationary and ergodic. Combined with the conditions $\mathbb{E}[\log^+ \Upsilon]<\infty$ in \ref{H:Upsilon}, $\mathbb{E}[\log^- \Psi]<\infty$ in \ref{H:Psi},  and $\P(K)>2/3$ in \ref{H:setK}, this implies that $\limsup_{n\to\infty}n^{-1}\sum_{k=0}^{n-1}\log \Upsilon(\theta^k\omega) < +\infty$,  $\liminf_{n\to\infty}n^{-1}\sum_{k=0}^{n-1}\log \Psi(\theta^k\omega) >-\infty$ and $\lim_{n\to\infty}n^{-1}\sum_{k=0}^{n-1}\ind_K(\theta^k\omega)>2/3$, which are very similar to \citet[Theorem 1, conditions (12)-(14)]{Douc2009}. Motivation for the technical condition $\P(K)>2/3$ is given in \citet[Remark 5]{Douc2014}.

\subsubsection*{Comparison with \cite{Whiteley2013}}
A form of forgetting of the initial condition in $V$-norm for the prediction filters was established by the second author of the present work in \citet{Whiteley2013}, but under restrictive assumptions on the observation sequence. In the notation of the present work,   \citet[Corollary 1]{Whiteley2013} establishes under certain conditions that there exist $\overline{\setY}\subseteq\setY$ and constants  $c<+\infty$, $\rho<1$ depending on $\overline{\setY}$  such that
\begin{equation}
y\in\overline{\setY}^{\naturals} \quad \Rightarrow\quad \|\overline{\Pi}_{n}^\lambda(y,\cdot)-\overline{\Pi}_{n}^{\tilde\lambda}(y,\cdot)\|_V \leq C \rho^n,\quad \forall n\in\mathbb{N}.\label{eq:NW_result}
\end{equation}
\citet[Section 3.1.1.]{Whiteley2013} provides an example for which one can take $\overline{\setY}= \setY$ , but in other cases one must resort to strict inclusion $\overline{\setY}\subset\setY$ and the condition $y\in\overline{\setY}^{\naturals}$ becomes very restrictive. For instance in the setting of  \citet[Section 3.1.2.]{Whiteley2013}, $\setY=\mathbb{R}^{d_y}$, $\overline{\setY}$ is a compact set  and one can easily construct situations in which $\mathbb{P}(Y\in \overline{\setY}^{\naturals})=0$ when $\mathbb{P}$ is the law of $Y=(Y_n)_{n\in\naturals}$ under the correctly-specified HMM. Thus \eqref{eq:NW_result} does not satisfactorily extend \eqref{eq:DM_result_front}. Theorem \ref{thm:conv_to_zero} overcomes this deficiency.

\section{Discussion} \label{sub:examples}
The examples below serve two main purposes. Firstly, we show that for one of the models treated by \citet{Douc2009}, the tv-norm convergence as in \eqref{eq:DM_result_front} can be extended to convergence in $V$-norm with no further assumptions.  Secondly, we provide a simple example to illustrate  that under certain conditions on $g$, the filters can forget their initial condition in $V$-norm for some $V$ such that the $V$-norm of each prediction filter is infinite.

\subsection{A nonlinear state-space model}\label{sec:nonlinear_ssm}

Throughout Section \ref{sec:nonlinear_ssm} we take $\setX=\R^{d_x}$, $\setY=\R^{d_y}$ and we focus on the following nonlinear model, for $n\geq0$:
\begin{align}
X_{n+1}&=X_n+b(X_n)+\Sigma(X_n)V_n \label{eq:signal}\\
Y_n &= h(X_n)+\beta W_n\label{eq:obs}
\end{align}
where $b:\R^{d_x}\to \R^{d_x}$ and $h:\R^{d_x}\to\R^{d_y}$ are vectors of functions,  $\Sigma$ is a $d_x\times d_x$ matrix of continuous functions, $\beta>0$ is a constant and $(V_n)_{n\in\naturals}$ and $(W_n)_{n\in\naturals}$ are sequences of i.i.d. standard Gaussian vectors of appropriate dimension. The following conditions are considered by \citet[p.1245]{Douc2009}:
\begin{enumerate}[label=(\subscript{E}{\arabic*})]
\item\label{H:signal_B}$b$ is locally bounded and $\lim_{r\rightarrow+\infty}\sup_{|x|\geq r}\big|x+b(x)|-|x|=-\infty;$
\item\label{H:signal_sig} With $\sigma(\tau,x):=\tau^T\Sigma(x)\Sigma(x)^T\tau$,
\begin{align}
0<\inf_{(x,\tau)\in\mathbb{R}^{2d_x},\,|\tau|=1}\sigma(\tau,x)\leq \sup_{(x,\tau)\in\mathbb{R}^{2d_x},\,|\tau|=1}\sigma(\tau,x)<+\infty;
\end{align}
\item \label{H:obs} $h$ is locally bounded and $\limsup_{|x|\to+\infty} |x|^{-1}\log |h(x)|<+\infty$.
\end{enumerate}

\begin{remark}\label{rem:f}
 For an arbitrarily chosen $c>0$, set $V(x)=\exp(c|x|)$. Let $f$ be the Markov transition kernel corresponding to the signal model \eqref{eq:signal}. The following facts are gathered together from \citet[][p.1246]{Douc2009}. Under \ref{H:signal_B}-\ref{H:signal_sig}: there exists a constant $M<+\infty$ such that
 \begin{equation}
 \frac{f(V)(x)}{V(x)}\leq M\exp[c(|x+b(x)|-|x|)],\quad\forall x\in\setX,\label{eq:f_drift}
 \end{equation}
 and for any bounded Borel set  $C\in\sigX$ of strictly positive Lebesgue measure, there are constants $0<\tilde{\epsilon}^-_C\leq\tilde{\epsilon}^+_C<+\infty$ such that:
 \begin{equation}
 \tilde{\epsilon}^-_C\,\tilde{\mu}_{C}(A\cap C)\leq f(x,A\cap C )\leq  \tilde{\epsilon}^+_C\,\tilde{\mu}_{C}(A\cap C),\quad \forall (x,A)\in C\times\sigX,\label{eq:f_LD}
 \end{equation}
 where $\tilde{\mu}_C$ is the normalized restriction of Lebesgue measure to $C$. The Markov chain $(X_n)_{n\in\naturals}$ with transition kernel $f$ is aperiodic and positive Harris recurrent with unique invariant distribution, say $\pi$, such that $\pi(V)<+\infty$, and the bi-variate process $(X_n,Y_n)_{n\in\naturals}$ given by \eqref{eq:signal}-\eqref{eq:obs} is also aperiodic and positive Harris recurrent, with invariant distribution $\pi(\dd x)g(x,y)\dd y$ where $\dd y$ is Lebesgue measure on $\R^{d_y}$ and
 \begin{equation}
 g(x,y)\propto\exp(-[y-h(x)]^T[y-h(x)]/2\beta^2).\label{eq:g_example}
 \end{equation}
\end{remark}

The following proposition is an application of Theorem \ref{thm:conv_to_zero}.


\begin{prop}\label{prop:NLSS}
Assume \ref{H:signal_B}-\ref{H:obs} hold for the nonlinear state-space model. Let $\P$ be the probability measure on $\sigY^{\otimes\naturals}$ which is the law of $(Y_n)_{n\in\naturals}$ when the bi-variate process $(X_n,Y_n)_{n\in\naturals}$ satisfies  \eqref{eq:signal}-\eqref{eq:obs} and $X_0\sim\pi$. Then for any constant $c>0$ there exists  a  constant $\rho\in(0,1)$ such that, with $V(x)=\exp(c|x|)$,
$$
\lim_{n\to+\infty} \rho^{-n} \|\Pi_n^{\lambda}(Y,\cdot)-\Pi_n^{\tilde{\lambda}}(Y,\cdot)\|_{V} =0,\quad \mathbb{P}-a.s.
$$
for any two probability measures $\lambda,\tilde{\lambda}$ such that $\lambda(V)<+\infty$ and $\tilde{\lambda}(V)<+\infty$.
\end{prop}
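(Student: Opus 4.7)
The plan is to verify assumptions \ref{H:theta}--\ref{H:setK} within the filter instance \eqref{eq:filters} of the general setup and then invoke Theorem \ref{thm:conv_to_zero}. In this instance $Q^\omega(x,\dd x')=f(x,\dd x')g(x',Y_1(\omega))$ and $\nu^\omega(\dd x)\propto g(x,Y_0(\omega))\lambda(\dd x)$; write $g^*:=(2\pi\beta^2)^{-d_y/2}$ for the uniform upper bound on $g$ and $M$ for the constant in \eqref{eq:f_drift}. Take $V(x)=\exp(c|x|)$, $D=\{x:|x|\leq r_0\}$ for some fixed $r_0>0$, $K=\Omega$ (which immediately gives \ref{H:setK}), and $\theta$ the shift on $\setY^{\naturals}$. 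Since $X_0\sim\pi$ and the bivariate chain $(X_n,Y_n)_{n\in\naturals}$ is positive Harris recurrent (Remark \ref{rem:f}), it is stationary and ergodic under $\P$, and so is its $Y$-marginal, yielding \ref{H:theta}.

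Using $g\leq g^*$ together with \eqref{eq:f_drift} gives $Q^\omega(V)(x)/V(x)\leq g^*M\exp[c(|x+b(x)|-|x|)]$ for every $\omega$. Under \ref{H:signal_B} the right-hand side is bounded in $x$, so $\Upsilon$ is deterministically bounded and \ref{H:Upsilon} follows. Set
\begin{equation*}
W(x):=\bigl[c(|x|-|x+b(x)|)-\log(g^*M)\bigr]^+.
\end{equation*}
Then $W$ is measurable, locally bounded, and diverges as $|x|\to\infty$ by \ref{H:signal_B}, so each $C_d=\{W\leq d\}$ is bounded and $V_d<+\infty$. Choosing $\underline{d}:=\sup_{x\in D}W(x)<+\infty$ ensures $D\subseteq C_{\underline{d}}$, and for every $\omega\in K$ and $x\in C_d^c\subseteq\{W>0\}$ one has $\exp(-W(x))=g^*M\exp[c(|x+b(x)|-|x|)]\geq Q^\omega(V)(x)/V(x)$, establishing the drift bound in \ref{H:drift}(b).

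For the LD part of \ref{H:drift}(a), define $N^\omega_d:=\int_{C_d}g(x',Y_1(\omega))\tilde{\mu}_{C_d}(\dd x')$ and $\mu^\omega_{C_d}(\dd x'):=(N^\omega_d)^{-1}g(x',Y_1(\omega))\ind_{C_d}(x')\tilde{\mu}_{C_d}(\dd x')$; then \eqref{eq:f_LD} applied on the bounded set $C_d$ yields $\tilde{\epsilon}^-_{C_d}N^\omega_d\mu^\omega_{C_d}(A\cap C_d)\leq Q^\omega(x,A\cap C_d)\leq\tilde{\epsilon}^+_{C_d}N^\omega_d\mu^\omega_{C_d}(A\cap C_d)$ for every $x\in C_d$. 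Taking $\epsilon^\pm_{C_d}(\omega):=\tilde{\epsilon}^\pm_{C_d}N^\omega_d$, the ratio $\epsilon^-_{C_d}/\epsilon^+_{C_d}=\tilde{\epsilon}^-_{C_d}/\tilde{\epsilon}^+_{C_d}\in(0,1]$ is independent of $\omega$, and because $C_d\cap D=D$ the integrability conditions in \ref{H:drift}(a) and \ref{H:Psi} both reduce to $\E[\log^-\!\int_D g(x',Y_1)\dd x']<+\infty$. Jensen's inequality applied to the Gaussian form \eqref{eq:g_example} bounds this quantity by a constant plus a multiple of $\E[|Y_1|^2]+\sup_{x'\in D}|h(x')|^2$; $h$ is locally bounded, $|h(x)|\leq C_1\exp(C_2|x|)$ by \ref{H:obs}, and $\E[|Y_1|^2]\leq\pi(|h|^2)+\beta^2 d_y<+\infty$ because Remark \ref{rem:f} shows $\pi(\exp(c'|x|))<+\infty$ for every $c'>0$.

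Finally, for any probability measure $\lambda$ with $\lambda(V)<+\infty$ the initial measure $\nu^\omega$ is measurable in $\omega$, $\nu^\omega(V)\leq g^*\lambda(V)/\int g(x,Y_0(\omega))\lambda(\dd x)<+\infty$ $\P$-a.s.\ (the denominator being positive because $g>0$ pointwise), and $\nu^\omega Q^\omega(D)>0$ because \ref{H:signal_sig} makes $f(x,\cdot)$ a nondegenerate Gaussian charging every non-empty open set and $g>0$; the same applies to $\tilde{\nu}$. Hence $\nu,\tilde{\nu}\in\mathcal{M}(D,V)$, and Theorem \ref{thm:conv_to_zero} applied with $\eta^\omega_{\nu,n}=\Pi_n^\lambda(Y(\omega),\cdot)$ and $\eta^\omega_{\tilde{\nu},n}=\Pi_n^{\tilde{\lambda}}(Y(\omega),\cdot)$ yields the stated conclusion. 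The main obstacle is verifying the integrability condition in \ref{H:drift}(a) for every $d\geq\underline{d}$; it ultimately relies on the Gaussian form of $g$ together with the fact that $\pi$ possesses exponential moments of all orders, itself a consequence of the super-exponential decay of $|x+b(x)|-|x|$ in \ref{H:signal_B}.
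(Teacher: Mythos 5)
Your proof is correct and reaches the conclusion by the same overall strategy as the paper --- verify \ref{H:theta}--\ref{H:setK} and the membership $\nu,\tilde{\nu}\in\mathcal{M}(D,V)$ in the scenario \eqref{eq:filters}, then invoke Theorem \ref{thm:conv_to_zero} --- but you handle the key step, the local Doeblin condition in \ref{H:drift}(a), genuinely differently. The paper keeps the deterministic minorizing measure $\tilde{\mu}_{C_d}$ from \eqref{eq:f_LD} and sets $\epsilon_{C_d}^-(\omega)=\tilde{\epsilon}_{C_d}^-\inf_{x\in C_d}g(x,Y_1(\omega))$, $\epsilon_{C_d}^+(\omega)=\tilde{\epsilon}_{C_d}^+\sup_{x\in C_d}g(x,Y_1(\omega))$; the ratio $\epsilon_{C_d}^-/\epsilon_{C_d}^+$ then degenerates for large $|Y_1(\omega)|$, which forces the choice $K=\{\omega:Y_1(\omega)\in\overline{\setY}\}$ with $\overline{\setY}$ compact and a separate observation that $\overline{\setY}$ can be enlarged until $\P(K)>2/3$. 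You instead fold the likelihood into the dominating measure, $\mu_{C_d}^{\omega}(\dd x')\propto g(x',Y_1(\omega))\ind_{C_d}(x')\,\dd x'$, which is permitted since the definition of an LD set allows $\mu_{C_d}$ to depend on $\omega$; this makes $\epsilon_{C_d}^-/\epsilon_{C_d}^+=\tilde{\epsilon}_{C_d}^-/\tilde{\epsilon}_{C_d}^+$ deterministic, lets you take $K=\Omega$ (so \ref{H:setK} is trivial and the drift bound in \ref{H:drift}(b) must, and does, hold for every $\omega$ because $g\leq g^*$ uniformly), and yields a deterministic, typically sharper contraction coefficient $\rho_{C_d}$. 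The integrability condition becomes $\E[\log^-\int_D g(\cdot,Y_1)\,\dd x']<+\infty$ rather than the paper's $\E[\log^-\inf_{x\in C_d}g(x,Y_1)]<+\infty$; your Jensen step and the paper's direct bound both reduce it to $\E[|Y_1|^2]<+\infty$, established identically from \ref{H:obs} and $\pi(\exp(c^*|\cdot|))<+\infty$. The remaining verifications (choice of $W$, $D$, $\underline{d}$, conditions \ref{H:theta}, \ref{H:Upsilon}, \ref{H:Psi}, and $\nu,\tilde{\nu}\in\mathcal{M}(D,V)$) coincide with the paper's up to minor presentational differences. In short, your construction buys a cleaner verification with no compactness argument for $K$, while the paper's choice of a $Y$-independent $\mu_{C_d}$ is the one that would survive in models where the likelihood cannot be absorbed into the minorizing measure.
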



\begin{proof}
 Let $V(x)=\exp(c|x|)$ with some arbitrary $c>0$. Fix any two probability measures $\lambda,\tilde{\lambda}$ on $\sigX$ such that $\lambda(V)\vee\tilde{\lambda}(V)<+\infty$. Consider the scenario \eqref{eq:filters}, let $\nu,\tilde{\nu}$ be the probability kernels associated with $\lambda,\tilde{\lambda}$ as per \eqref{eq:filters}, so $\eta_{\nu,n}^\omega(\cdot) \equiv \Pi_n^\lambda(Y(\omega),\cdot)$, $\eta_{\tilde{\nu},n}^\omega(\cdot) \equiv \Pi_n^{\tilde{\lambda}}(Y(\omega),\cdot)$.  To apply Theorem \ref{thm:conv_to_zero} we need to verify \ref{H:theta}-\ref{H:setK} and check that $\nu,\tilde{\nu}$ are members of $\mathcal{M}(D,V)$.

For \ref{H:theta}, the measure preservation part holds since $(Y_n)_{n\in\naturals}$ is by assumption a stationary process under $\P$. For the ergodicity part, by Remark \ref{rem:f}, the $(X_n,Y_n)_{n\in\naturals}$ chain is aperiodic and positive Harris recurrent, so by \citet[Theorem 2.6, Chapter 6, p.167]{revuz1974} the tail $\sigma$-algebra for the process $(X_n,Y_n)_{n\in\naturals}$ is a.s. trivial,  from which it follows that the  $\sigma$-algebra of events which are invariant w.r.t. the shift operator $\theta$, i.e., $\{A\in\F:\theta^{-1}(A)=A\}$,
is $\P$-trivial.

\ref{H:Upsilon} readily holds, since it follows from \ref{H:signal_B}, \eqref{eq:f_drift} and \eqref{eq:g_example}    that
$$\sup_{\omega} \Upsilon(\omega)=\sup_{\omega,x} Q^\omega(V)(x)/V(x)\leq\sup_{x,y}g(x,y)\sup_xf(V)(x)/V(x)<+\infty.$$

Consider now \ref{H:Psi} and \ref{H:drift}. For brevity, write
$$
\psi(x):=c(|x+b(x)|-|x|)+\log M+\log \sup_{x',y}g(x',y),\quad x\in\setX
$$
and then set $W(x)=0\vee-\psi(x)$. It follows from \ref{H:signal_B} that $\lim_{r\to\infty}\inf_{|x|\geq r}W(x)=+\infty$, therefore for any $d\in[0,+\infty)$, the set $C_d=\{x:W(x)\leq d\}$ is bounded. There must exist $\underline{d}\in[0,+\infty)$ such that $\{x:|x|\leq 1\}\subseteq C_{\underline{d}}$, otherwise $W$ would not be locally bounded, which would contradict the local boundedness of $b$ in \ref{H:signal_B}. Thus for each  $d\in[\underline{d},+\infty)$, $C_d$  is a bounded Borel set of strictly positive Lebesgue measure. Set $D=C_{\underline{d}}$.


Let $\overline{\setY}\in\sigY$ be any compact set and take $K=\{\omega:Y_1(\omega)\in \overline{\setY}\}$. For part a) of \ref{H:drift}, using \eqref{eq:f_LD}, we find that $C_d$ is a LD-set for $Q$ with:
$$
\epsilon_{C_d}^-(\omega)=\tilde{\epsilon}_{C_d}^-\inf_{x\in C_d}g(x,Y_1(\omega)), \quad  \epsilon_{C_d}^+(\omega)=\tilde{\epsilon}_{C_d}^+\sup_{x\in C_d}g(x,Y_1(\omega)),
$$
and $\mu_{C_d}^{\omega}(\cdot)=\tilde{\mu}_{C_d}(\cdot)$. Since $\tilde{\epsilon}^-_C\leq\tilde{\epsilon}^+_C$, we have $\epsilon_{C_d}^-(\omega)\leq \epsilon_{C_d}^+(\omega)$ for all $\omega\in\Omega$, as required. We also have $\inf_{\omega\in K}\big(\epsilon_{C_d}^-(\omega)/\epsilon_{C_d}^+(\omega)\big)\in (0,1]$ since $\overline{\setY}$ is compact, $C_d$ is bounded and $h$ is locally bounded under \ref{H:obs}.


To complete the verification of part a) of \ref{H:drift}, it remains to check that for any $d\geq\underline{d}$,
\begin{equation}
\mathbb{E}\Big[\log^-\big(\tilde{\epsilon}_{C_d}^- \tilde{\mu}_{C_d}(D)\inf_{x\in C_d }g(x,Y_1)\big)\Big]<+\infty,\label{eq:E_finite}
\end{equation}
where we note that $\tilde{\epsilon}_{C_d}^-$, $ \tilde{\mu}_{C_d}(D)$ are strictly positive constants by construction. Using the facts that: for any $a,b>0$, $\log^-(ab)\leq \log^-(a) + \log^-(b)$; $[y-h(x)]^T[y-h(x)]\leq 2 (|y|^2 +  |h(x)|^2)$; and, since $h$ is locally bounded, $\sup_{x\in C_d}|h(x)|^2<+\infty$; to establish \eqref{eq:E_finite} it suffices to show that $\mathbb{E}\left[|Y_1|^2\right]<+\infty$, or equivalently,
\begin{equation}
\int_{\setX}\int_{\setY}g(x,y)|y|^2 \,\dd y \,\pi(\dd x)<+\infty.\label{eq:int_finite}
\end{equation}
To establish \eqref{eq:int_finite} we follow \citet[][p.1246]{Douc2009}. Let $c^*>0$ and $V^*(x)=\exp(c^*|x|)$. Then, elementary manipulations give $\int_{\setY}g(x,y)|y|^2 \,\dd y = |h(x)|^2+\text{const.}$, and $\sup_x(|h(x)|^2/V^*(x))<+\infty$ by \ref{H:obs} as long as $c^*>2\limsup_{|x|\to+\infty}|x|^{-1}\log|h(x)|$, which we may assume since $c^*>0$ was arbitrary. By Remark \ref{rem:f}, $\pi(V^*)<+\infty$, so \eqref{eq:int_finite} holds, and then \eqref{eq:E_finite} does too, completing the verification of part a) of \ref{H:drift}. It is easily checked that the conditions of \ref{H:drift}b) hold by construction of $V$, $W$, and  $C_d$ and by \ref{H:signal_B}.

To verify \ref{H:Psi}, recall we have taken $D=C_{\underline{d}}$ and apply \eqref{eq:E_finite} with $d=\underline{d}$. \ref{H:setK} is easily achieved since  $\overline{\setY}\in\sigY$ was an arbitrary compact set.

Finally, we need to check $\nu,\tilde{\nu}\in \mathcal{M}(D,V)$. Since $\sup_{x,y}g(x,y)<+\infty$ and $g(x,y)>0$, $\int_{\setX} g(x,Y_0(\omega))\lambda(\dd x)\in(0,+\infty)$, hence $\nu^\omega(\cdot)$ is a probability measure on $\sigX$ for all $\omega$. The measurability of $\omega\mapsto \nu^\omega(A)$ is immediate. By assumption $\lambda(V)<+\infty$, hence $\nu^\omega(V)\leq\sup_{x,y}g(x,y)\lambda(V)/\int_{\setX} g(x,Y_0(\omega))\lambda(\dd x)<+\infty$ for all $\omega$, and $\nu^\omega Q^\omega (D)>0$ for all $\omega$ since $g(x,y)>0$ and $f(x,\cdot)$ has a strictly positive density w.r.t. Lebesgue measure. The same arguments apply to $\tilde{\nu}$.

\end{proof}

\subsection{Stability for filters but not for prediction filters}

It can be shown by arguments almost identical to those in the proof of Proposition \ref{prop:NLSS} that the claim of that proposition also holds with the prediction filters $\overline{\Pi}_n^\lambda,\overline{\Pi}_n^{\tilde{\lambda}}$ in place of the filters $\Pi_n^\lambda,\Pi_n^{\tilde{\lambda}}$. We omit the details to avoid repetition.  However, as we shall illustrate next, if $V(x)$ grows suitably quickly as $|x|\to\infty$, it can occur that the filters are stable in $V$-norm where as the prediction filters are not, in the sense of the following proposition. To demonstrate this phenomenon with a simple and short proof, we consider a specific linear state-space model. The result could easily generalized to a broader class of models, at the expense of a proof which involves lengthier technical manipulations.

\begin{prop}
Consider the model of Section \ref{sec:nonlinear_ssm} in the specific case $d_x=d_y=1$, and
\begin{align}
X_{n+1}&=\alpha X_n+ V_n\label{eq:linmodel_sig} \\
Y_n &= X_n+W_n\label{eq:linmodel_obs}
\end{align}
where $|\alpha|< 1$. Let $\P$ be the probability measure on $\sigY^{\otimes\naturals}$ which is the law of $(Y_n)_{n\in\naturals}$ when the bi-variate process $(X_n,Y_n)_{n\in\naturals}$ satisfies  \eqref{eq:signal}-\eqref{eq:obs} and $X_0\sim\pi$. Then there exist constants $c \in(1,2)$ and $\rho\in(0,1)$ such that, with  $V(x)=\exp(c\,x^2/2)$,
$$
\|\overline{\Pi}^\lambda_n(y,\cdot)\|_V = +\infty,\quad\forall (n,y)\in\mathbb{N}\times \setY^\naturals,
$$
for any probability measure $\lambda$, whereas
$$
\lim_{n\to+\infty} \rho^{-n} \|\Pi_n^{\lambda}(Y,\cdot)-\Pi_n^{\tilde{\lambda}}(Y,\cdot)\|_{V} =0,\quad \mathbb{P}-a.s.
$$
for any two probability measures $\lambda,\tilde{\lambda}$ such that $\lambda(V)<+\infty$ and $\tilde{\lambda}(V)<+\infty$.
\end{prop}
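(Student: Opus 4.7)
The plan is to handle the two assertions separately. The prediction-filter claim reduces to showing $f(V)(x)=+\infty$ pointwise, since then for every $n\geq 1$,
\[
\|\overline{\Pi}_n^\lambda(y,\cdot)\|_V\;\geq\;\overline{\Pi}_n^\lambda(y,V)\;=\;\int f(V)(x)\,\Pi_{n-1}^\lambda(y,\mathrm{d}x)\;=\;+\infty.
\]
A direct Gaussian computation gives $f(V)(x)=\int \exp(cx'^2/2)\,\phi(x'-\alpha x)\,\mathrm{d}x'$, whose exponent simplifies to $\tfrac{c-1}{2}x'^2+\alpha xx'-\tfrac{\alpha^2}{2}x^2$; with $c>1$ the coefficient of $x'^2$ is strictly positive, so the integral diverges for every $x\in\mathbb{R}$.

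For the stability claim I would apply Theorem~\ref{thm:conv_to_zero} to the setup \eqref{eq:filters}. The key point is that the extra factor $g(x',Y_1(\omega))$ in $Q^\omega(x,\mathrm{d}x')=f(x,\mathrm{d}x')\,g(x',Y_1(\omega))$ supplies an additional $-x'^2/2$ in the $x'$-exponent, so the analogous Gaussian integral converges as soon as $c<2$. Completing the square in $x'$ and dividing by $V(x)$ yields
\[
\frac{Q^\omega(V)(x)}{V(x)}\;=\;C\exp\!\bigl[\gamma_1 Y_1(\omega)^2+\gamma_2 Y_1(\omega)\,x+\gamma_3 x^2\bigr],\qquad \gamma_3\;=\;\frac{c^2+c(\alpha^2-2)-\alpha^2}{2(2-c)},
\]
with $C<+\infty$ and $\gamma_1,\gamma_2$ explicit functions of $\alpha,c$. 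The quadratic $c^2+c(\alpha^2-2)-\alpha^2$ has positive root $c_+=\tfrac{1}{2}(2-\alpha^2+\sqrt{4+\alpha^4})$, and the identity $\sqrt{4+\alpha^4}>\alpha^2$ shows $c_+>1$ for every $\alpha$. Fixing any $c\in(1,c_+)\subseteq(1,2)$ thus produces $\gamma_3<0$, and on $K:=\{|Y_1|\leq M\}$ the bound $\gamma_2 Y_1x\leq|\gamma_2|M|x|$ gives $Q^\omega(V)(x)/V(x)\leq\exp[-W(x)]$ for a function $W$ with $W(x)\sim|\gamma_3|x^2/2$ at infinity; the sublevel sets $C_d=\{W\leq d\}$ are then bounded Borel sets of positive Lebesgue measure for all $d$ above some $\underline{d}$, delivering the drift half of \ref{H:drift}.

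The remaining hypotheses I would verify along the lines of the proof of Proposition~\ref{prop:NLSS}. \ref{H:theta} follows from aperiodicity and positive Harris recurrence of $(X_n,Y_n)$ together with triviality of its tail $\sigma$-algebra. The closed form of $\opnm Q^\omega\opnm_V$ combined with $\mathbb{E}[Y_1^2]<+\infty$ (since $Y_1$ is Gaussian under the stationary law) yields \ref{H:Upsilon}. Local-Doeblin sets for $Q^\omega$ on each bounded $C_d$ arise by bounding $\phi(\cdot-\alpha x)$ above and below on compacts exactly as in Remark~\ref{rem:f}, with $\epsilon_{C_d}^\pm(\omega)$ proportional to $\sup/\inf_{x'\in C_d}g(x',Y_1(\omega))$, so the integrability conditions in \ref{H:drift}(a) and in \ref{H:Psi} (taking $D=C_{\underline{d}}$) both reduce to $\mathbb{E}[Y_1^2]<+\infty$; \ref{H:setK} is achieved by choosing $M$ large, and the membership $\nu,\tilde{\nu}\in\mathcal{M}(D,V)$ under $\lambda(V),\tilde{\lambda}(V)<+\infty$ is verbatim the argument at the end of Proposition~\ref{prop:NLSS}. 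Theorem~\ref{thm:conv_to_zero} then produces the required $\rho\in(0,1)$.

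The main obstacle is the algebraic trade-off underlying the example: divergence of $f(V)$ demands $c>1$ while the drift condition for $Q^\omega V/V$ demands $c<c_+$, so the whole construction rests on the non-emptiness of $(1,c_+)$, equivalently on the identity $c_+>1$. Once $c$ is fixed in this interval, the rest is a routine specialization of the Gaussian arguments already used for Proposition~\ref{prop:NLSS}.
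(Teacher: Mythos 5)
Your proposal is correct and follows essentially the same route as the paper: the divergence of $f(V)$ for $c>1$ gives the prediction-filter claim, and the stability claim is obtained by computing $Q^\omega(V)(x)/V(x)$ in closed form, choosing $c\in(1,2)$ so that the coefficient of $x^2$ in the exponent is negative (your root condition $c<c_+$ is exactly the paper's condition $\kappa>0$), and then re-running the verification of \ref{H:theta}--\ref{H:setK} from Proposition~\ref{prop:NLSS} with $K=\{|Y_1|\leq M\}$. The only differences are presentational (you solve the quadratic explicitly where the paper asserts existence of a suitable $c$), plus an immaterial factor-of-two slip in your asymptotic $W(x)\sim|\gamma_3|x^2/2$.
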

\begin{proof}
First note that for any $c\in(1,2)$ and $x\in\setX$,
$$
\int_{\setX} f(x,\dd z)V(z) \propto \int_{\setX}  \exp \left[ \frac{z^2}{2}(c-1) +\alpha zx - \frac{\alpha^2x^2}{2}  \right] \dd z =+\infty,
$$
where in the middle expression, $\dd z$ denotes Lebesgue measure on $\mathbb{R}$,
hence $\|\overline{\Pi}^\lambda_n(y,\cdot)\|_V = +\infty$ as claimed.

The proof is completed by applying Theorem \ref{thm:conv_to_zero} in the scenario \eqref{eq:filters}. In verifying \ref{H:theta}-\ref{H:setK} and checking that $\nu,\tilde{\nu}$ are members of $\mathcal{M}(D,V)$, where $\nu,\tilde{\nu}$ are the probability kernels associated with $\lambda,\tilde{\lambda}$ as per \eqref{eq:filters}, we can re-use some but not all of the arguments in the proof of Proposition \ref{prop:NLSS}.

Condition  \ref{H:theta} is verified exactly as in the proof of Proposition \ref{prop:NLSS}. For the condition \ref{H:Upsilon}, it follows by elementary manipulations that
$$
V(x)^{-1}\int f(x,\dd z)g(z,y)V(z) = \exp\left[\psi(x,y)\right],
$$
where
$$
\psi(x,y):=-\kappa\frac{\alpha^{2}x^{2}}{2}+\frac{\alpha xy}{2-c}+\frac{y^{2}}{2}\left(\frac{1}{2-c}-1\right)-\frac{\log 2\pi+\log(2-c)}{2},
$$
and we assume $c\in(1,2)$ is such that $0<1+c/\alpha^{2}-1/(2-c)=:\kappa$; note that it is easily checked that such a $c\in(1,2)$ indeed exists for any $\alpha\in(-1,1)$. Also,
$$
\Upsilon(\omega)\propto \exp\left[ \tilde{c}\, Y_{1}(\omega)^2 \right],
$$
where $\tilde{c}$ is a finite constant depending on $c$ and $\alpha$. It is easily seen from \eqref{eq:linmodel_sig} that $\pi$ is Gaussian, and hence from  \eqref{eq:linmodel_obs} that the law of $Y_1$ under $\P$ is also Gaussian, so $\mathbb{E}[\log^+ \Upsilon]<+\infty$, as required for  \ref{H:Upsilon}.

Take $\overline{\setY}\subset\setY$ as any compact set and set $W(x)=0 \vee -\sup_{y\in\overline{\setY}}\psi(x,y)$. Then $W$ is locally bounded and $\lim_{r\to\infty}\inf_{|x|\geq r}W(x)=+\infty$. The definitions and arguments used in verifying conditions \ref{H:Psi}-\ref{H:setK} then follow exactly as in the proof of Proposition \ref{prop:NLSS}, as does the verification that $\nu,\tilde{\nu}$ are members of $\mathcal{M}(D,V)$.
\end{proof}

\subsection{Concluding remarks}\label{sub:concluding_remarks}

We close with an outline of possible further applications and extensions of our results. For some HMM's, such as the linear-Gaussian state-space model with low-dimensional noise in \citet[Section 4.1]{Douc2014}, the Markov transition kernel $f$ may have a singular component, so that the LD part of \ref{H:drift}a) cannot be satisfied in the scenarios \eqref{eq:filters} or \eqref{eq:pred_filters}, but for some $m>1$ the $m$-fold iterate of $f$ admits a density w.r.t. a $\sigma$-finite measure. In such a situations, one may consider as an alternative to \eqref{eq:pred_filters}, the scenario in which again  $\Omega = \setY^\naturals$, but $\theta$ is the $m$-step shift $Y(\theta \omega)=(Y_m(\omega),Y_{m+1}(\omega),\ldots)$ and
$$
\nu^\omega=\lambda,\quad Q^\omega(x,A) = g(x,Y_0(\omega))\int_{\setX^m} \ind_A(x_m) f(x,dx_1)\prod_{k=2}^m f(x_{k-1},dx_k)g(x_{k-1},Y_{k-1}(\omega)),
$$
so then $\eta_{\nu,n}^\omega$ coincides with $\overline{\Pi}_{mn}^\lambda(Y(\omega),\cdot)$. Should the appropriate assumptions be satisfied, Theorem \ref{thm:conv_to_zero} would establish stability w.r.t. initial conditions for the subsequence $\{\overline{\Pi}_{n}^\lambda(Y,\cdot);n=km,k\in\naturals\}$.  Further investigation of this matter is left as a potential topic of future research.

Another possible avenue of investigation is to extend analysis to a two-sided time horizon,  $\Omega=\setY^\mathbb{Z}$, so then  $\theta$ the shift operator on $\Omega$ is invertible, and leverage our results to address the existence and uniqueness of a limiting probability kernel,  $\eta_\star^\omega(\cdot)= \lim_{n\to +\infty}\eta_{\nu,n}^{\theta^{-n}\omega}(\cdot)$, which in e.g. the scenario \eqref{eq:filters} may be regarded as a conditional distribution of $X_0$ given $(Y_{-n})_{n\in\naturals}$. Probability kernels of this form arise in the study of maximum likelihood estimators for HMM's, see \citep{Douc2012} and references therein. More abstractly, probability kernels of this form arise as generalizations of the Perron-Frobenius eigen-measure in the works of \cite{kifer1996perron} concerning large deviations for Markov chains in random environments, and \cite{whiteley2014twisted} concerning variance growth behaviour of sequential Monte Carlo approximations of marginal likelihoods for HMM's. Future research may address extension of the results of \cite{kifer1996perron} and \cite{whiteley2014twisted} under conditions similar to those in the present paper.

\section{Proofs and auxiliary results}\label{sec:proofs}
\subsection{Preliminaries}\label{sec:prelims}
We proceed with some further definitions. Define
\begin{equation}\label{eq:Mkernel}
G^{\omega}(x):=Q^{\omega}(x,\setX),\quad M^{\omega}(x,A)=\frac{Q^{\omega}(x,A)}{Q^{\omega}(x,\setX)},\quad (\omega,x,A)\in\Omega\times \setX\times \mathcal{X}.
\end{equation}


Throughout Sections \ref{sec:prelims} and \ref{sec:proof_of_vnorm}, we fix some $\nu\in\mathcal{M}(D,V)$ and define for $n\in\naturals$,
\begin{equation}
\lambda^{\omega}_n:=\eta_{\nu,n}^{\omega}(G^{\theta^{n}\omega}),\quad \omega\in\Omega\label{eq:lam_defn}
\end{equation}
and for $0\leq k\leq n$, functions  $h_{k,n}:\Omega\times\setX\rightarrow \mathbb{R}$ according to
\begin{equation}
h_{n,n}^{\omega}(x):= 1,\quad h^{\omega}_{k,n}(x):=\frac{Q_{n-k}^{\theta^k\omega}(x,\setX)}{\prod_{i=k}^{n-1}\lambda^{\omega}_i},\quad (\omega,x,k)\in \Omega\times\setX\times\{0,\dots,n-1\}.\label{eq:h_defn}
\end{equation}
Also for $1\leq k\leq n$, define $v_{k,n}:\Omega\times\setX\rightarrow \mathbb{R}$,
$$
v_{k,n}^{\omega}(x):=\frac{V(x)}{h_{k,n}^{\omega}(x)},\quad (\omega,x)\in \Omega\times\setX,
$$
with $V$ is as in \ref{H:drift}, and
\begin{equation}
S^{\omega}_{k,n}(x,A):=\frac{Q^{\theta^{k-1}\omega}(\ind_A h^{\omega}_{k,n})(x)}{\lambda^{\omega}_{k-1}h^{\omega}_{k-1,n}(x)},\quad (\omega,x,A)\in \Omega\times \setX\times\mathcal{X}.\label{eq:S_defn}
\end{equation}

\begin{lemma}\label{lemma:lambda}
Assume \ref{H:theta}, \ref{H:Upsilon}, \ref{H:Psi} and let $\nu\in\mathcal{M}(D,V)$. Then there exists $\bar{\Omega}\in\mathcal{F}$ with $\P(\bar{\Omega})=1$ such that the following hold for all $\omega\in \bar{\Omega}$. For all $0\leq k \leq n$,  $\lambda_n^\omega\in(0,+\infty)$, $\| h^\omega_{k,n} \|_V<+\infty$ and for all $x\in\setX$, $h^\omega_{k,n}(x)>0$.  For all $1\leq k\leq n$ and $x\in\setX$, $v^{\omega}_{k,n}(x)\in (0,+\infty)$, $S_{k,n}^{\omega}(x,\cdot)$ is a probability measure on $\sigX$, and
\begin{equation}
\eta_{\nu,n}^\omega(A) = \int_{\setX}  (S_{1,n}^{\omega}\cdots S_{n,n}^{\omega})(x,A)h^\omega_{0,n}(x) \nu^\omega(dx),\quad \forall A\in\sigX.\label{eq:S_eta_id}
\end{equation}
\end{lemma}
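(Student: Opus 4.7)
The plan is to build everything on top of the full-measure set $\bar{\Omega}$ supplied by Lemma \ref{lemma:eta_prelim}, possibly shrinking it to another full-measure set to accommodate the additional claims. On this set all the relevant quantities $\opnm Q_n^\omega \opnm_V$, $\prod_{k=0}^{n-1}\Upsilon(\theta^k\omega)$ and $\prod_{k=0}^{n-1}\Psi(\theta^k\omega)$ are finite and positive, $\eta_{\nu,n}^\omega(V)<+\infty$, and $Q_n^\omega(x,\setX)>0$ for every $x$ and every $n$.

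First I would handle $\lambda_n^\omega=\eta_{\nu,n}^\omega(G^{\theta^n\omega})$. Positivity is immediate since $G^{\theta^n\omega}(x)=Q^{\theta^n\omega}(x,\setX)>0$ for all $x$ on $\bar{\Omega}$ (by the distinguished hypothesis on $Q$ together with measure-preservation of $\theta$), and $\eta_{\nu,n}^\omega$ is a probability measure. For finiteness I would use that $V\geq 1$ to bound $G^{\theta^n\omega}(x)\leq V(x)\opnm Q^{\theta^n\omega}\opnm_V=V(x)\Upsilon(\theta^n\omega)$, so $\lambda_n^\omega\leq \Upsilon(\theta^n\omega)\eta_{\nu,n}^\omega(V)<+\infty$. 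For $h_{k,n}^\omega$, strict positivity of the numerator $Q_{n-k}^{\theta^k\omega}(x,\setX)$ (Lemma \ref{lemma:eta_prelim}) and of the denominator $\prod_{i=k}^{n-1}\lambda_i^\omega$ give $h_{k,n}^\omega(x)>0$; submultiplicativity of $\opnm\cdot\opnm_V$ then yields $\|h_{k,n}^\omega\|_V\leq \prod_{i=k}^{n-1}\Upsilon(\theta^i\omega)/\prod_{i=k}^{n-1}\lambda_i^\omega<+\infty$. Positivity and finiteness of $v_{k,n}^\omega(x)=V(x)/h_{k,n}^\omega(x)$ follow since $V$ is $[1,\infty)$-valued and $h_{k,n}^\omega(x)\in(0,\infty)$.

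Next I would check that each $S_{k,n}^\omega(x,\cdot)$ is a probability measure. Nonnegativity and $\sigma$-additivity are inherited from $Q^{\theta^{k-1}\omega}(x,\cdot)$, and total mass equals one because of the telescoping identity
\[
Q^{\theta^{k-1}\omega}(h_{k,n}^\omega)(x)=\frac{Q^{\theta^{k-1}\omega}Q_{n-k}^{\theta^k\omega}(x,\setX)}{\prod_{i=k}^{n-1}\lambda_i^\omega}=\frac{Q_{n-k+1}^{\theta^{k-1}\omega}(x,\setX)}{\prod_{i=k}^{n-1}\lambda_i^\omega}=\lambda_{k-1}^\omega h_{k-1,n}^\omega(x),
\]
whose right-hand side is exactly the denominator in \eqref{eq:S_defn}.

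Finally, for the identity \eqref{eq:S_eta_id}, the main idea is a telescoping recursion: the relation just displayed can be rewritten as $h_{k-1,n}^\omega(x)\lambda_{k-1}^\omega S_{k,n}^\omega(x,dx')=Q^{\theta^{k-1}\omega}(x,dx')h_{k,n}^\omega(x')$. Iterating from $k=1$ to $k=n$ and using $h_{n,n}^\omega\equiv 1$ gives
\[
h_{0,n}^\omega(x_0)\,(S_{1,n}^\omega\cdots S_{n,n}^\omega)(x_0,dx_n)=\frac{Q_n^\omega(x_0,dx_n)}{\prod_{i=0}^{n-1}\lambda_i^\omega}.
\]
Integrating $\ind_A$ and then against $\nu^\omega$, it remains to identify the denominator with $\nu^\omega Q_n^\omega(\setX)$. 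That follows by an induction using $\nu^\omega Q_m^\omega(\setX)=\nu^\omega Q_{m-1}^\omega(G^{\theta^{m-1}\omega})=\nu^\omega Q_{m-1}^\omega(\setX)\,\lambda_{m-1}^\omega$ and $\nu^\omega(\setX)=1$, so $\nu^\omega Q_n^\omega(\setX)=\prod_{i=0}^{n-1}\lambda_i^\omega$ and the right-hand side of \eqref{eq:S_eta_id} collapses to $\nu^\omega Q_n^\omega(A)/\nu^\omega Q_n^\omega(\setX)=\eta_{\nu,n}^\omega(A)$. Nothing here is deep; the only mild subtlety is bookkeeping to ensure the exceptional $\P$-null set is fixed once and for all (independent of $k,n,A$), which is achieved by taking a countable intersection of the full-measure sets produced above.
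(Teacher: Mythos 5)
Your proposal is correct and follows essentially the same route as the paper's proof: both build on the full-measure set from Lemma \ref{lemma:eta_prelim}, bound $\lambda_n^\omega$ and $\|h_{k,n}^\omega\|_V$ via $\opnm\cdot\opnm_V$ and its submultiplicativity, obtain normalization of $S_{k,n}^\omega$ from the identity $Q^{\theta^{k-1}\omega}(h_{k,n}^\omega)=\lambda_{k-1}^\omega h_{k-1,n}^\omega$, and derive \eqref{eq:S_eta_id} by telescoping together with $\prod_{i=0}^{n-1}\lambda_i^\omega=\nu^\omega Q_n^\omega(\setX)$. The only differences are cosmetic (e.g.\ you justify positivity of $\lambda_n^\omega$ directly from $G^{\theta^n\omega}>0$ everywhere rather than via $\inf_{x\in D}G^{\theta^n\omega}\geq\Psi(\theta^n\omega)$), and your bookkeeping of the null set is sound.
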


\begin{proof}
Let $\bar{\Omega}\in\mathcal{F}$ be the event of probability $1$ in Lemma \ref{lemma:eta_prelim}. Pick any $\omega\in \bar{\Omega}$. Then for any $n\in\naturals$, $G^{\theta^n\omega}(x)=Q^{\theta^n\omega}(x,\setX)>0$ for all $x\in\setX$, and  $\inf_{x\in D}G^{\theta^n\omega}(x)=\inf_{x\in D }Q^{\theta^n\omega}(x,\setX)\geq \Psi(\theta^n\omega)>0$, hence $\lambda_n^\omega > 0$. Also $\lambda_n^\omega = \eta_{\nu,n}^\omega Q^{\theta^n\omega}(\setX)\leq \| \eta_{\nu,n}^\omega \|_V  \opnm Q^{\theta^n\omega} \opnm_V<+\infty$.

Then, again using Lemma  \ref{lemma:eta_prelim}, $\| h^\omega_{k,n} \|_V = \opnm Q_{n-k}^{\theta^k \omega} \opnm_V / \prod_{i=k}^{n-1}\lambda_i ^\omega<+\infty$ and the inequality $h^{\omega}_{k,n}(x)>0$ holds since $Q^\omega(x,\setX)>0$ for all $x\in\setX$. By definition, $V(x)\in[1,+\infty)$, so $\| h^\omega_{k,n} \|_V <+\infty$ implies $h^{\omega}_{k,n}(x)<+\infty$ for all $x$, and we have already established $h^{\omega}_{k,n}(x)>0$, hence $v^{\omega}_{k,n}(x)\in(0,+\infty)$.  It follows from \eqref{eq:h_defn} that $Q^{\theta^{k-1}\omega}(h^\omega_{k,n})=\lambda_{k-1}^\omega h^\omega_{k-1,n}$, so $S^{\omega}_{k,n}(x,\cdot)$ is a probability measure.

It is easily checked using \eqref{eq:S_defn} that $(S_{1,n}^{\omega}\cdots S_{n,n}^{\omega})(x,A)=Q_n^\omega(x,A)/[h^\omega_{0,n}(x)\prod_{i=0}^{n-1}\lambda_i^\omega ]$ and from \eqref{eq:eta_defn} and \eqref{eq:lam_defn}   that  $\prod_{i=0}^{n-1}\lambda_i^\omega =\nu^\omega Q_n^\omega (\setX)$, from which \eqref{eq:S_eta_id} follows.
\end{proof}

\subsection{Proof of Proposition \ref{prop:Vnorm}}\label{sec:proof_of_vnorm}


\begin{lemma}\label{lem:Skernel}
Assume \ref{H:theta}-\ref{H:drift} and set:
$$
T_d(\omega):=1\wedge \epsilon^{-}_{C_d}(\omega) \mu_{C_d}^{\omega}( C_d \cap D),\quad \omega \in \Omega.
$$
Then there exists $\bar{\Omega}\in\mathcal{F}$ such that $\P(\bar{\Omega})=1$ and for all $(\omega,x,d)\in \bar{\Omega}\times\setX\times[ \underline{d},+\infty)$ and any $1\leq k\leq n$,
$$
S^{\omega}_{k,n}(v^{\omega}_{k,n})(x)\leq \rho^{\omega}_{k,n}v^{\omega}_{k-1,n}(x)+B^{\omega}_{k,n},
$$
where
\begin{align*}
&\rho^{\omega}_{k,n}:=\frac{1\vee \Upsilon(\theta^{k-1}\omega)}{\lambda^{\omega}_{k-1}}e^{-d\ind(\theta^{k-1}\omega\in K)}<+\infty,\\
&B^{\omega}_{k,n}:=V_d\frac{1\vee \Upsilon(\theta^{k-1}\omega)}{T_d(\theta^{k-1}\omega)}\prod_{i=k}^{n-1}\frac{\lambda^{\omega}_i}{1\wedge \Psi(\theta^i\omega)}<+\infty,
\end{align*}
with the convention that the product is unity when $k=n$, and with the dependence of $\rho^{\omega}_{k,n}$ and $B^{\omega}_{k,n}$ on $d$ suppressed in the notation.
\end{lemma}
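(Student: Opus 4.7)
The whole argument rests on the identity
\begin{equation*}
S^{\omega}_{k,n}(v^{\omega}_{k,n})(x)=\frac{Q^{\theta^{k-1}\omega}(V)(x)}{\lambda^{\omega}_{k-1}\,h^{\omega}_{k-1,n}(x)},
\end{equation*}
which follows from the definition \eqref{eq:S_defn} of $S^{\omega}_{k,n}$ by substituting $v^{\omega}_{k,n}h^{\omega}_{k,n}=V$. All of the work is then in bounding the right-hand side according to whether $x\in C_d^c$ or $x\in C_d$, and, in the first case, whether $\theta^{k-1}\omega\in K$ or not.

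For $x\in C_d^c$, the drift in \ref{H:drift}b) yields $Q^{\theta^{k-1}\omega}(V)(x)\leq V(x)e^{-W(x)}\leq V(x)e^{-d}$ when $\theta^{k-1}\omega\in K$, while off $K$ only the crude bound $Q^{\theta^{k-1}\omega}(V)(x)\leq \Upsilon(\theta^{k-1}\omega)V(x)$ from \ref{H:Upsilon} is available. Dividing each by $\lambda^{\omega}_{k-1}h^{\omega}_{k-1,n}(x)$ and noting $V(x)/h^{\omega}_{k-1,n}(x)=v^{\omega}_{k-1,n}(x)$, the two sub-cases are uniformly dominated by $\rho^{\omega}_{k,n}v^{\omega}_{k-1,n}(x)$, exactly by construction of the factor $(1\vee\Upsilon(\theta^{k-1}\omega))e^{-d\ind(\theta^{k-1}\omega\in K)}/\lambda^{\omega}_{k-1}$, and using $B^{\omega}_{k,n}\geq 0$ closes the case.

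The more delicate case is $x\in C_d$. Here I would bound the numerator crudely by $Q^{\theta^{k-1}\omega}(V)(x)\leq V_d(1\vee\Upsilon(\theta^{k-1}\omega))$ (finite by \ref{H:drift}b) and \ref{H:Upsilon}) and lower-bound the denominator via the LD structure. Concretely, since $x\in C_d$ the LD property gives
\begin{equation*}
Q^{\theta^{k-1}\omega}(x,\,C_d\cap D)\geq \epsilon_{C_d}^{-}(\theta^{k-1}\omega)\,\mu_{C_d}^{\theta^{k-1}\omega}(C_d\cap D)\geq T_d(\theta^{k-1}\omega),
\end{equation*}
and this lower bound propagates through the remaining $n-k$ kernels via $\inf_{y\in D}Q^{\theta^{i}\omega}(y,D)\geq \Psi(\theta^i\omega)$ (from \ref{H:Psi}), yielding
\begin{equation*}
Q_{n-k+1}^{\theta^{k-1}\omega}(x,\setX)\geq T_d(\theta^{k-1}\omega)\prod_{i=k}^{n-1}\bigl(1\wedge\Psi(\theta^i\omega)\bigr).
\end{equation*}
Inverting, dividing by $\prod_{i=k-1}^{n-1}\lambda^{\omega}_i$, and telescoping $\prod_{i=k-1}^{n-1}\lambda^{\omega}_i/\lambda^{\omega}_{k-1}=\prod_{i=k}^{n-1}\lambda^{\omega}_i$ produces precisely $B^{\omega}_{k,n}$, which together with $\rho^{\omega}_{k,n}v^{\omega}_{k-1,n}(x)\geq 0$ closes this case as well.

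Finiteness of $\rho^{\omega}_{k,n}$ and $B^{\omega}_{k,n}$ on a full-probability event follows from Lemmas \ref{lemma:eta_prelim} and \ref{lemma:lambda} (giving $\Upsilon(\theta^{k-1}\omega)<+\infty$, $\Psi(\theta^i\omega)>0$, $\lambda^{\omega}_i\in(0,+\infty)$), combined with the integrability condition $\E[\log^-(\epsilon_{C_d}^{-}\mu_{C_d}(C_d\cap D))]<+\infty$ in \ref{H:drift}a), which forces $T_d(\theta^{k-1}\omega)>0$ almost surely. The main obstacle I anticipate is not the algebra but the bookkeeping required to have a single exceptional null set $\bar{\Omega}$ work uniformly in $d\in[\underline{d},+\infty)$; I would handle this by exploiting the monotonicity of $C_d$ (and hence of $V_d$ and, up to minor checking, of the LD constants) in $d$, so that almost-sure statements along a countable cofinal subset of $d$'s imply the uniform conclusion.
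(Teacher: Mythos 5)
Your proof is correct and follows essentially the same route as the paper's: the same starting identity for $S^{\omega}_{k,n}(v^{\omega}_{k,n})(x)$, the same case split on $x\in C_d$ versus $x\in C_d^c$ (with the $K$ versus $K^c$ subsplit absorbed into the factor $(1\vee\Upsilon(\theta^{k-1}\omega))e^{-d\ind(\theta^{k-1}\omega\in K)}$), and the same local-Doeblin-plus-$\Psi$ lower bound on the denominator, which the paper phrases in terms of $Q^{\theta^{k-1}\omega}(h^{\omega}_{k,n})(x)$ rather than $Q_{n-k+1}^{\theta^{k-1}\omega}(x,\setX)$ but which is the identical estimate after cancelling the $\lambda^{\omega}_i$'s. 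The uniformity-in-$d$ of the null set that you flag is glossed over in the paper too (its $\bar{\Omega}$ implicitly depends on $d$ through the requirement $T_d(\theta^k\omega)>0$ for all $k$), but this is immaterial since $d$ is fixed before the lemma is invoked downstream.
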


\begin{proof}
Let $\bar{\Omega}$ be the intersection between: i) a set of $\P$-probability $1$ on which all the inequalities in the statement of Lemma \ref{lemma:eta_prelim} hold and ii) the set of $\P$-probability $1$ in the statement of   Lemma \ref{lemma:lambda}. Pick any $\omega\in\bar{\Omega}$. Let $d\geq \underline{d}$, $x\in\setX$, $1\leq k\leq n$ and note that
$$
S^{\omega}_{k,n}(v^{\omega}_{k,n})(x)=\frac{Q^{\theta^{k-1}\omega}(V)(x)}{\lambda^{\omega}_{k-1} h^{\omega}_{k-1,n}(x)}.
$$
If  $x\not\in C_d$, we have under \ref{H:drift},
\begin{align*}
S^{\omega}_{k,n}(v^{\omega}_{k,n})(x)&= \frac{V(x)}{\lambda^{\omega}_{k-1} h^{\omega}_{k-1,n}(x)}Q^{\theta^{k-1}\omega}(V)(x)/V(x)\\
&=v^{\omega}_{k-1,n}(x)\frac{Q^{\theta^{k-1}\omega}(V)(x)/V(x)}{\lambda^{\omega}_{k-1}}\\
&\leq v^{\omega}_{k-1,n}(x)\rho^{\omega}_{k,n},
\end{align*}
where $\rho^{\omega}_{k,n}$ is finite by Lemma's \ref{lemma:eta_prelim} and \ref{lemma:lambda}.

Now replace $\bar{\Omega}$ by its intersection with the set of $\omega'$ such that $\epsilon^{-}_{C_d}(\theta^k\omega')\mu^{\theta^k\omega'}_{C_d}(C_d\cap D)>0$ for all $k$, the latter being a set of probability $1$ by \ref{H:theta} and \ref{H:drift}. Then let $\omega$ be any point in this new $\bar{\Omega}$.

If $x\in C_d$,
\begin{align*}
\lambda^{\omega}_{k-1}h^{\omega}_{k-1,n}(x)=Q^{\theta^{k-1}\omega}(h^{\omega}_{k,n})(x)&\geq\frac{\epsilon^{-}_{C_d}(\theta^{k-1}\omega) \mu_{C_d}^{\theta^{k-1}\omega}(\ind_{C_d}Q_{n-k}^{\theta^{k}\omega}(\setX))}{\prod_{i=k}^{n-1}\lambda^{\omega}_i}\\
& \geq \epsilon^{-}_{C_d}(\theta^{k-1}\omega) \mu_{C_d}^{\theta^{k-1}\omega}(C_d\cap D)\prod_{i=k}^{n-1}\frac{\Psi(\theta^i\omega)}{\lambda^{\omega}_i}\\
&>0,
\end{align*}
with the convention here and in the remainder of the proof that the products are unity when $k=n$, and when $k<n$,  $\prod_{i=k}^{n-1}\Psi(\theta^i\omega)>0$ by Lemma \ref{lemma:eta_prelim} and $\prod_{i=k}^{n-1}\lambda^{\omega}_i<+\infty$ by Lemma \ref{lemma:lambda}.

Consequently, for  $x\in C_d$,
\begin{align*}
S^{\omega}_{k,n}(v^\omega_{k,n})(x)&\leq \frac{Q^{\theta^{k-1}\omega}(V)(x)}{T_d(\theta^{k-1}\omega)}\prod_{i=k}^{n-1}\frac{\lambda^{\omega}_i}{\Psi(\theta^i\omega)}\\
&\leq V(x)\frac{\Upsilon(\theta^{k-1}\omega)}{T_d(\theta^{k-1}\omega)}\prod_{i=k}^{n-1}\frac{\lambda^{\omega}_i}{\Psi(\theta^i\omega)}<+\infty.
\end{align*}
To conclude the proof, note that $V_d<+\infty$ under \ref{H:drift} and so for all $x\in\setX$,
$$
S^{\omega}_{k,n}(v^{\omega}_{k,n})(x)\leq \rho^{\omega}_{k,n}v^{\omega}_{k-1,n}(x)+B^{\omega}_{k,n}<+\infty.
$$
\end{proof}

\begin{lemma}\label{lem:Echeck}
Assume \ref{H:theta}-\ref{H:drift} and let
$$
Z(\omega):=\frac{1\vee \Upsilon(\omega)}{1\wedge \Psi(\omega)},\quad \omega\in\Omega.
$$
Then, for any $d\geq \underline{d}$,  $n\geq 1$, we have, $\P$-a.s.,
\begin{align*}
\eta_{\nu,n}^\omega(V)&\leq \frac{\nu^\omega(V)}{\nu^{\omega}Q^\omega(D)}e^{-d I_{0,n-1}^{\omega}} \prod_{i=0}^{n-1} Z(\theta^i\omega)
+V_d\, e^d\sum_{k=1}^n \frac{e^{-d I^\omega_{k-1,n-1} } }{T_d(\theta^{k-1}\omega)}\prod_{i=k-1}^{n-1}Z(\theta^i\omega)<+\infty.
\end{align*}
where  for $p\leq q$, $I_{p,q}^\omega:=\sum_{i=q}^p \ind_K (\theta^{i}\omega)$, $T_d$ is as in Lemma \ref{lem:Skernel}.

\end{lemma}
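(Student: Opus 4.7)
The plan is to combine the representation from Lemma \ref{lemma:lambda} with the one-step drift bound from Lemma \ref{lem:Skernel}. Since $h_{n,n}^\omega\equiv 1$ forces $v_{n,n}^\omega=V$, Lemma \ref{lemma:lambda} gives
$$
\eta_{\nu,n}^\omega(V)=\int (S_{1,n}^\omega\cdots S_{n,n}^\omega)(v_{n,n}^\omega)(x)\,h_{0,n}^\omega(x)\,\nu^\omega(dx),
$$
and I would iterate Lemma \ref{lem:Skernel} from $k=n$ down to $k=1$. Because each $S_{j,n}^\omega$ is a probability kernel and hence fixes constants, induction on the number of factors yields
$$
(S_{1,n}^\omega\cdots S_{n,n}^\omega)(v_{n,n}^\omega)(x)\leq \Bigl(\prod_{k=1}^n \rho_{k,n}^\omega\Bigr)v_{0,n}^\omega(x)+\sum_{k=1}^n\Bigl(\prod_{j=k+1}^n \rho_{j,n}^\omega\Bigr)B_{k,n}^\omega.
$$
Integrating against $h_{0,n}^\omega\,d\nu^\omega$, the identities $v_{0,n}^\omega h_{0,n}^\omega = V$ and $\nu^\omega(h_{0,n}^\omega)=\nu^\omega Q_n^\omega(\setX)/\prod_{i=0}^{n-1}\lambda_i^\omega=1$ turn the $x$-dependent term into $\nu^\omega(V)\prod_{k=1}^n\rho_{k,n}^\omega$ and leave the sum unchanged.

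The remaining task is to telescope both kinds of products using the explicit definitions of $\rho_{k,n}^\omega$ and $B_{k,n}^\omega$. In each summand $\bigl(\prod_{j=k+1}^n\rho_{j,n}^\omega\bigr)B_{k,n}^\omega$, the $\lambda_i^\omega$ factors cancel pairwise across the two products, and the surviving $(1\vee\Upsilon)$ and $1/(1\wedge\Psi)$ factors assemble into $\prod_{i=k}^{n-1}Z(\theta^i\omega)$; the leftover $(1\vee\Upsilon(\theta^{k-1}\omega))/T_d(\theta^{k-1}\omega)$ from $B_{k,n}^\omega$ is then bounded by $Z(\theta^{k-1}\omega)/T_d(\theta^{k-1}\omega)$ using $1\wedge\Psi\leq 1$, extending the $Z$-product to $i=k-1$. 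The identity $I_{k,n-1}^\omega=I_{k-1,n-1}^\omega-\ind_K(\theta^{k-1}\omega)$ then produces the $e^d$ prefactor, matching the second term of the stated bound. For the leading term, after extracting the $\rho$-exponent $e^{-dI_{0,n-1}^\omega}$ one is left with $\prod_{i=0}^{n-1}(1\vee\Upsilon(\theta^i\omega))/\prod_{i=0}^{n-1}\lambda_i^\omega$, and I would bound the denominator from below by $\nu^\omega Q^\omega(D)\prod_{i=1}^{n-1}(1\wedge\Psi(\theta^i\omega))$ using the iterated inequality $Q^{\theta^i\omega}(x,D)\geq \Psi(\theta^i\omega)$ valid for $x\in D$. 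Multiplying numerator and denominator by $1\wedge\Psi(\omega)\leq 1$ then produces the $Z$-product over the full range $i=0,\dots,n-1$, giving precisely the first term of the stated bound.

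Finiteness of the right-hand side is immediate from Lemma \ref{lemma:eta_prelim} (which gives $Z(\theta^i\omega)<+\infty$ $\P$-a.s.), from assumption \ref{H:drift}a) via the positivity of $T_d$, and from the conditions $\nu^\omega(V)<+\infty$ and $\nu^\omega Q^\omega(D)>0$ built into $\nu\in\mathcal{M}(D,V)$. The only subtle part is the index bookkeeping---tracking the shift between the $\theta^{k-1}$ appearing in $\rho_{k,n}^\omega$ and the $\theta^i$ for $i\geq k$ appearing in $B_{k,n}^\omega$, and realigning the exponents $I$ from $(k,n-1)$ to $(k-1,n-1)$---but no new probabilistic ingredient beyond Lemmas \ref{lemma:eta_prelim}, \ref{lemma:lambda}, and \ref{lem:Skernel} is required.
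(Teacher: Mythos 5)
Your proposal is correct and follows essentially the same route as the paper: iterate the drift bound of Lemma \ref{lem:Skernel} through the representation \eqref{eq:S_eta_id}, integrate using $v_{0,n}^\omega h_{0,n}^\omega=V$ and $\nu^\omega(h_{0,n}^\omega)=1$, cancel the $\lambda_i^\omega$ factors to assemble the $Z$-products, lower-bound $\prod_i\lambda_i^\omega=\nu^\omega Q_n^\omega(\setX)$ via $D$ and $\Psi$, and shift the index range from $k$ to $k-1$ at the cost of the $e^d$ prefactor. No substantive differences from the paper's argument.
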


\begin{proof}
Noting that $v^\omega_{n,n}=V$ and using Lemma \ref{lem:Skernel} with $\bar{\Omega}$ as therein, elementary manipulations show that for any $(\omega,x)\in\bar{\Omega}\times\setX$,
\begin{align}
(S_{1,n}^{\omega}\cdots S_{n,n}^{\omega})(V)(x)\leq v_{0,n}(x)\prod_{k=1}^n\rho^{\omega}_{k,n}+\sum_{k=1}^n B^{\omega}_{k,n}\prod_{i=k+1}^n\rho^{\omega}_{i,n}<+\infty,\label{eq:S_V_bound}
\end{align}
with the convention that the right-most product equals 1 when $k=n$.

For $0\leq k<n$ and still with $\omega\in\bar{\Omega}$,
\begin{align*}
\prod_{i=k+1}^n\rho^{\omega}_{i,n}=e^{-d I_{k,n-1}^{\omega}}\prod_{i=k}^{n-1} \frac{1\vee \Upsilon(\theta^i\omega)}{\lambda^{\omega}_i}<+\infty,
\end{align*}
$$
\prod_{i=0}^{n-1}\lambda^{\omega}_i=\nu^{\omega}Q_n^{\omega}(\setX)\geq \nu^{\omega}Q^\omega(D)\prod_{i=0}^{n-1}1\wedge\Psi(\theta^i\omega)>0,
$$
and for $1\leq k <n$,
\begin{align*}
B^{\omega}_{k,n}\prod_{i=k+1}^n\rho^{\omega}_{i,n}&= V_d \, e^{-d  I^{\omega}_{k,n-1}}\frac{1\vee \Upsilon(\theta^{k-1}\omega)}{T_d(\theta^{k-1}\omega)}
\prod_{i=k}^{n-1} Z(\theta^i \omega)\\
&\leq V_d\, e^d \,\frac{e^{-d  I^{\omega}_{k-1,n-1}}}{T_d(\theta^{k-1}\omega)}
\prod_{i=k-1}^{n-1} Z(\theta^i \omega)<+\infty,
\end{align*}
plugging these into \eqref{eq:S_V_bound}, multiplying by $h^\omega_{0,n}$, integrating w.r.t. $\nu^\omega$, and noting \eqref{eq:S_eta_id} and the fact that $\nu^\omega(h^\omega_{0,n})=1$ completes the proof.

\end{proof}

\begin{lemma}\label{lem:lower_bounded}
Let $(Y_n)_{n \geq 0}$ be a sequence of nonnegative, equi-distributed random variables defined on a common probability space. If the expected value of $\log^- Y_0$ is finite, then for any $\beta\in(0,1)$, $\inf_{n\geq 0} \beta ^{-n} Y_n >0$, a.s.
\end{lemma}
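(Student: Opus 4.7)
The plan is to control $\log^- Y_n$ by an arithmetic sequence in $n$ using a Borel--Cantelli argument, and then exponentiate. Write $c := -\log\beta > 0$, so that $\beta^{-n}Y_n \geq 1$ is equivalent to $\log^- Y_n \leq n c$.

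First I would dispense with the positivity issue. Since $\E[\log^- Y_0] < +\infty$, we have $\log^- Y_0 < +\infty$ almost surely, which under the convention $\log^-(x) = -\log(1\wedge x)$ forces $Y_0 > 0$ a.s. By equidistribution, $Y_n > 0$ a.s. for every fixed $n$, and hence, by a countable intersection of full-measure events, $Y_n > 0$ simultaneously for all $n \in \naturals$, a.s.

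Second, I would show that the events $A_n := \{\log^- Y_n > nc\}$ occur only finitely often. By equidistribution, $\P(A_n) = \P(\log^- Y_0 > nc)$, and by the standard tail-sum inequality for a nonnegative random variable $Z$,
\[
\sum_{n\geq 1}\P(Z > n) \leq \int_{0}^{+\infty}\P(Z > t)\,\dd t = \E[Z],
\]
applied to $Z = \log^- Y_0 / c$, I obtain $\sum_{n\geq 1}\P(A_n) \leq c^{-1}\E[\log^- Y_0] < +\infty$. The first Borel--Cantelli lemma (which does not require independence) then yields a random $N(\omega) < +\infty$ a.s.\ such that $\log^- Y_n(\omega) \leq nc$ for all $n\geq N(\omega)$, equivalently $Y_n(\omega) \geq \beta^n$, i.e.\ $\beta^{-n}Y_n(\omega) \geq 1$ for all $n\geq N(\omega)$.

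Third, I would combine the two steps. On the intersection of the two full-measure events above, $\beta^{-n}Y_n \geq 1$ for $n \geq N$, and $\beta^{-n}Y_n > 0$ for each of the finitely many $n < N$ (there are only finitely many, and each $Y_n$ is strictly positive), so
\[
\inf_{n\geq 0}\beta^{-n}Y_n \;\geq\; \min\Big\{1,\;\min_{0\leq n < N}\beta^{-n}Y_n\Big\} \;>\; 0,
\]
which is the claim. There is no real obstacle here: the only small point requiring care is that the hypothesis provides no independence between the $Y_n$, but this is harmless because the first Borel--Cantelli lemma needs only summability of marginal probabilities, and equidistribution together with the integrability of $\log^- Y_0$ is exactly what delivers that summability.
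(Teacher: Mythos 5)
Your proof is correct and complete: the reduction of $\beta^{-n}Y_n\geq 1$ to $\log^- Y_n\leq nc$, the tail-sum bound giving summability of $\P(\log^- Y_0>nc)$, and the first Borel--Cantelli lemma (which indeed requires no independence) together yield the claim. The paper itself offers no proof, deferring to \citet[Lemma 7]{Douc2012}, and your argument is exactly the standard one underlying that reference, so nothing further is needed.
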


\begin{proof}
See  \citet[][Lemma 7]{Douc2012}.
\end{proof}

\begin{proof}[Proof of Proposition \ref{prop:Vnorm}]
Let
$$
U^{\omega}_{d,n}:=
\sum_{k=1}^n \frac{e^{-d I^\omega_{k-1,n-1} } }{T_d(\theta^{k-1}\omega)}\prod_{i=k-1}^{n-1}Z(\theta^i\omega),\quad (d,\omega,n)\in[\underline{d},+\infty)\times \Omega\times\mathbb{N}^+
$$
with $T_d$ as in Lemma \ref{lem:Skernel} and  $Z$ as in Lemma \ref{lem:Echeck}. Note that Lemma \ref{lem:Echeck} implies that for any $d\geq\underline{d}$,
\begin{equation}
U_{d,n}^\omega<+\infty,\quad\forall n\in \mathbb{N}^+,\quad\ass \label{eq:U_finite}
\end{equation}

We shall show that there exists a $d^*\geq \underline{d}$ such that, for any $\beta\in (0,1)$ and $d>d^*$,
\begin{equation}
\lim_{n\rightarrow+\infty}\beta^n U^{\omega}_{d,n}=0,\quad\ass\label{eq:U_to_zero}
\end{equation}
and
\begin{equation}
\lim_{n\rightarrow+\infty}\beta^n e^{-d I_{0,n-1}^{\omega}} \prod_{i=0}^{n-1} Z(\theta^i\omega)=0,\quad\ass,\label{eq:first_term_to_zero}
\end{equation}
which, combined with Lemma \ref{lem:Echeck}, are enough to establish  $\lim_{n\rightarrow+\infty}\beta^n \|\eta_{\nu,n}^{\omega}\|_{V}=0$, $\P$-a.s.

Under  \ref{H:Upsilon} and \ref{H:Psi},
$$
0\leq \E\big[\log Z \big]=\E\big[|\log Z |\big]=\E\big[\log^+ \Upsilon \big]+\E\big[\log^- \Psi \big]<+\infty,
$$
so with $l:=\E\big[\log Z \big]$ and $\gamma:=\P(K)$, by \ref{H:theta} and by Birkhoff's ergodic theorem,
\begin{align}
\xi_n^{\omega}&:=n^{-1} \sum_{k=0}^{n-1} \log Z(\theta^k\omega)-l\;\to\; 0,\quad\ass\label{eq:xi_to_zero}\\
\tilde{\xi}_n^{\omega}&:=n^{-1}I^{\omega}_{0,n-1}-\gamma\;\to\; 0,\quad\ass\label{eq:xi_tilde_to_zero}
\end{align}
both as $n\to+\infty$, where we note that $\gamma>0$ by hypothesis of the proposition.

Now define $d^*:=l/\gamma \vee \underline{d} $ and set arbitrarily $d>d^*$; the main ideas of the proof are to show that under this condition and the assumptions of the proposition, with probability 1,
the terms $e^{-d I^\omega_{k-1,n-1}} \prod_{i=k-1}^{n-1} Z(\theta^i\omega)$ and  $1/T_d(\theta^{k-1}\omega)$ appearing in $U_{d,n}^\omega$ cannot grow ``too fast''  as $n-k\to+\infty$ and $k\to+\infty$, respectively.

Let $\beta\in(0,1)$ be as in the statement of the lemma, and pick $c\in (0,d\gamma - l)$ and $\tilde{\beta}\in(\beta,1)$ such that
\begin{equation}
\frac{\beta}{\tilde{\beta}} \exp (2c)<1. \label{eq:beta_tilde}
\end{equation}

Note that $\E\big[|\log T_d|\big]=\E\big[\log^-\big(\epsilon_{C_d}^-\mu_{C_d}(C_d\cap D)\big)\big]<+\infty$  under \ref{H:drift}, so by Lemma \ref{lem:lower_bounded} and under \ref{H:theta} we have
\begin{equation}
\underline{T}_{d,\tilde{\beta}}^{\omega}:=\inf_{n\in\mathbb{N}}\tilde{\beta}^{-n}T_d(\theta^n\omega)>0,\quad\ass\label{eq:T_d_inf}
\end{equation}
and by \eqref{eq:xi_to_zero}-\eqref{eq:xi_tilde_to_zero}, there exists $N^\omega_{c,d}\in\mathbb{N}$ such that
\begin{equation}
n\geq N^\omega_{c,d} \;\Rightarrow\; |\xi_n^\omega-d\tilde{\xi}_n^\omega|\leq c,\quad \ass\label{eq:N_cd}
\end{equation}

Now let $\omega$ be any point in a set of probability $1$ on which \eqref{eq:U_finite}, \eqref{eq:xi_to_zero}, \eqref{eq:xi_tilde_to_zero},  \eqref{eq:T_d_inf} and \eqref{eq:N_cd} all hold. Since we are interested in the limit as $n\to+\infty$, we assume for the rest of the proof that $n>N^\omega_{c,d}+1$.

Consider the decomposition
$$
U_{d,n}^\omega = U_{d,n,	1}^\omega + U_{d,n,2}^\omega,
$$
\begin{align}
 U_{d,n,1}^\omega&:=\sum_{k=1}^{N_{c,d}^\omega}  \frac{e^{-d I^\omega_{k-1,n-1} } }{T_d(\theta^{k-1}\omega)}\prod_{i=k-1}^{n-1}Z(\theta^i\omega),\label{eq:U_1}\\
 U_{d,n,2}^\omega&:=\sum_{k=N_{c,d}^\omega+1}^n \frac{e^{-d I^\omega_{k-1,n-1} } }{T_d(\theta^{k-1}\omega)}\prod_{i=k-1}^{n-1}Z(\theta^i\omega).\label{eq:U_2}
\end{align}

To prepare to bound \eqref{eq:U_1} and \eqref{eq:U_2}, note that
\begin{align}
e^{-d I^\omega_{k-1,n-1} } \prod_{i=k-1}^{n-1}Z(\theta^i\omega) &= e^{-(n-k+1)(d\gamma-l)}e^{n(\xi_n-d\tilde{\xi}_n) +(k-1)(d\tilde{\xi}_{k-1}-\xi_{k-1})}\label{eq:IZ_prod}\\
&\leq e^{-(n-k+1)(d\gamma-l-c)}e^{2nc},\label{eq:IZ_prod_bound}
\end{align}
where the equality holds for any $k\leq n $ and the inequality holds if additionally $k > N^\omega_{c,d} $.

To bound $U_{d,n,1}^\omega$,
\begin{align*}
U_{d,n,1}^\omega&= \left(e^{-d I^\omega_{N_{c,d}^\omega,n-1}}\prod_{i=N_{c,d}^\omega}^{n-1}Z(\theta^i\omega)\right)\sum_{k=1}^{N_{c,d}^\omega}   \frac{e^{-d I^\omega_{k-1,N_{c,d}^\omega-1} } }{T_d(\theta^{k-1}\omega)}\prod_{i=k-1}^{N_{c,d}^\omega-1} Z(\theta^i\omega)\\
&\leq e^{-(n-N_{c,d}^\omega)(d\gamma-l-c)}e^{2nc}\, U_{d,N_{c,d}^{\omega}}^{\omega}\\
&\leq e^{2nc} U_{d,N_{c,d}^{\omega}}^{\omega}
\end{align*}
where \eqref{eq:IZ_prod_bound} has been used and where $U_{d,N_{c,d}^{\omega}}^{\omega}$
does not depend on $n$ and is finite by \eqref{eq:U_finite}.

For $U_{d,n,2}^\omega$, applying  \eqref{eq:IZ_prod_bound} gives,
\begin{align*}
U_{d,n,2}^\omega &\leq \sum _{k = N_{c,d}^\omega+1}^n \frac{\tilde{\beta}^{-(k-1)}}{\underline{T}_{d,\tilde{\beta}}^{\omega}} e^{-(n-k+1)(d\gamma-l-c)}e^{2nc}\\
&\leq \frac{e^{2nc}\tilde{\beta}^{-n}}{\underline{T}_{d,\tilde{\beta}}^{\omega}} \frac{1}{1-e^{-(d\gamma - l - c)}}.
\end{align*}

Combining these upper bounds for $U_{d,n,1}$ and $U_{d,n,2}$, and recalling \eqref{eq:beta_tilde},
\begin{align*}
\beta^n U_{n,d}^\omega &\leq \beta^n e^{2nc}\left(U_{d,N_{c,d}^{\omega}}^{\omega} +\frac{\tilde{\beta}^{-n}}{\underline{T}_{d,\tilde{\beta}}^{\omega}}  \frac{1}{1-e^{-(d\gamma - l - c)}}  \right)\\
& \to 0,\quad \text{as} \quad n\to+\infty,
\end{align*}
which completes the proof of \eqref{eq:U_to_zero}.

In order to establish \eqref{eq:first_term_to_zero} and thus complete the proof of the proposition,   \eqref{eq:IZ_prod} applied with $k-1=0$ gives:
\begin{align*}
\beta^n e^{-d I^\omega_{0,n-1} } \prod_{i=0}^{n-1}Z(\theta^i\omega) &\leq \beta^n e^{-n(d\gamma-l-c)},\\
& \to 0,\quad \text{as}\quad n \to +\infty.
\end{align*}

\end{proof}

\subsection{Proof of Proposition \ref{prop:forget}}\label{sec:proof_of_forget}

\begin{proof}[Proof of Proposition \ref{prop:forget}]

We first introduce some additional notation. For $\bar{x}:=(x,x')\in\setX^2=:\bar{\setX}$, let $\bar{V}(\bar{x}):=V(x)V(x')$, for  functions $\psi_1,\,\psi_2:\setX\rightarrow\mathbb{R}$, let $\psi_1\otimes\psi_2(\bar{x}):=\psi_1(x)\psi_2(x')$, and for any two measures $\mu_1,\mu_2$ let $\mu_1\otimes\mu_2$ denote their direct product. Then let $\bar{Q}^{\omega}(\bar{x},\cdot ):=Q^{\omega}(x,\cdot)\otimes Q^{\omega}(x',\cdot)$.

Let $\omega$ be any point in a set of probability $1$ one which all the inequalities in the statement of Lemma \ref{lemma:eta_prelim} hold and $\nu,\tilde{\nu}$ satisfy the properties associated with their memberships of $\mathcal{M}(D,V)$.  We keep this $\omega$ fixed throughout the proof,  so to slightly economise on notation we  suppress the dependence of  $\nu$ and $\tilde{\nu}$ on $\omega$.

Independently of $\omega$, fix $d\geq \underline{d}$ and $\varphi:\setX\rightarrow\mathbb{R}$ a measurable function such that $|\varphi|\leq V$. Then, with $\varphi^+\geq0$ and $\varphi^-\geq0$ being respectively the positive and negative parts of $\varphi$, i.e. $\varphi=\varphi^+ -\varphi^-$,
\begin{align*}
|\eta^{\omega}_{\nu,n}(\varphi)-\eta^{\omega}_{\tilde{\nu},n}(\varphi)|&\leq |\eta^{\omega}_{\nu,n}(\varphi^+)-\eta^{\omega}_{\tilde{\nu},n}(\varphi^+)|+ |\eta^{\omega}_{\nu,n}(\varphi^-)-\eta^{\omega}_{\tilde{\nu},n}(\varphi^-)|\\
&\leq \frac{|\Delta_{\nu,\tilde{\nu},n}^{\omega}(\varphi^+,\ind_{\setX})| + |\Delta_{\nu,\tilde{\nu},n}^{\omega}(\varphi^-,\ind_{\setX})|}{\nu\otimes\tilde{\nu} \bar{Q}_{n}^{\omega}(\bar{\setX})}
\end{align*}
where, for measurable functions $\psi_1,\,\psi_2:\setX\rightarrow\mathbb{R}^+$,
$$
\Delta^{\omega}_{\nu,\tilde{\nu},n}(\psi_1,\psi_2):=\nu Q_{n}^{\omega}(\psi_1)\tilde{\nu} Q_{n}^{\omega}(\psi_2)-\nu Q_{n}^{\omega}(\psi_2)\tilde{\nu} Q_{n}^{\omega}(\psi_1).
$$

Let  $\psi:\setX\rightarrow\mathbb{R}^+$ be any measurable function such that $\psi\leq V$. Then, following very similar arguments to \citet[][Proof of Proposition 5, pp.2712-2713]{Douc2012}, one obtains
\begin{align*}
\Big| \Delta&^{\omega}_{\nu,\tilde{\nu},n}(\psi,\ind_{\setX})\Big|\\
&\leq \int_{\bar{\setX}^{n+1}}  |\psi\otimes \ind_{\setX}(\bar{x}_n)-\ind_{\setX}\otimes\psi(\bar{x}_n)|\rho_{C_d}^{\sum_{j=0}^{n-1}\ind_{\bar{C}_d\times\bar{C}_d}(\bar{x}_{j},\bar{x}_{j+1})\ind_K(\theta^j\omega)} \\
&\quad\times\nu\otimes\tilde{\nu}(\dd\bar{x}_0) \prod_{i=0}^{n-1}\bar{Q}^{\theta^i\omega}(\bar{x}_{i},\dd\bar{x}_{i+1}).
\end{align*}
Now since, $V\geq 1$ and $0\leq \psi\leq V$,
$$
|\psi\otimes \ind_{\setX}(\bar{x})-\ind_{\setX}\otimes\psi(\bar{x})|\leq \psi\otimes \ind_{\setX}(\bar{x}) \vee \ind_{\setX}\otimes\psi(\bar{x}) \leq\bar{V}(\bar{x}),\quad\forall\bar{x}\in\bar{\setX}.
$$


Both $0\leq \varphi^-\leq V$ and $0 \leq \varphi^+\leq V$, so we may apply the above bounds to obtain:
\begin{multline}\label{eq:delta}
|\eta^{\omega}_{\nu,n}(\varphi)-\eta^{\omega}_{\tilde{\nu},n}(\varphi)|\nu\otimes\tilde{\nu} \bar{Q}_{n}^{\omega}(\bar{\setX})\\
\leq2 \int_{\bar{\setX}^{n+1}}\bar{V}(\bar{x}_n)\rho_{C_d}^{\sum_{j=0}^{n-1}\ind_{\bar{C}_d\times\bar{C}_d}(\bar{x}_{j},\bar{x}_{j+1})\ind_K(\theta^j\omega)}\nu\otimes\tilde{\nu}(\dd\bar{x}_0) \prod_{i=0}^{n-1}\bar{Q}^{\theta^i\omega}(\bar{x}_{i},\dd\bar{x}_{i+1}).
\end{multline}

Then, for any set $\bar{A}\in\sigX^{\otimes2}$ writing $M_{\bar{A},n}(\bar{x}_{0:n-1}):=\sum_{i=0}^{n-1}\ind_{\bar{A}}(\bar{x}_i)$ and following very similar arguments to those of of \citet[][Proof of Proposition 5, p.2714]{Douc2012}, we have under the hypothesis of the proposition on $I_{0,n-1}^\omega$, that for any $\beta\in(\gamma^-,1]$,
\begin{equation}
\rho_{C_d}^{\sum_{j=0}^{n-1}\ind_{\bar{C}_d\times\bar{C}_d}(\bar{x}_{j},\bar{x}_{j+1})\ind_K(\theta^j\omega)} \leq \rho_{C_d}^{\lfloor n (\beta-\gamma^-)\rfloor} + \ind\left\{M_{\bar{C}_d^c,n}(\bar{x}_{0:n-1})\geq (n-\lfloor n\beta\rfloor)/2\right\}.\label{eq:rho_bound}
\end{equation}
Substituting \eqref{eq:rho_bound} into \eqref{eq:delta} and noticing $\nu\otimes\tilde{\nu}\bar{Q}_n^\omega(\bar{\setX})\geq \nu Q^\omega(D)\tilde{\nu} Q^\omega(D)\prod_{i=0}^{n-1}1\wedge\Psi(\theta^k\omega)^2 $,
\begin{multline}
|\eta^{\omega}_{\nu,n}(\varphi)-\eta^{\omega}_{\tilde{\nu},n}(\varphi)| \\
  \leq 2 \rho_{C_d}^{\lfloor n (\beta-\gamma^-)\rfloor}\|\eta^{\omega}_{\nu,n}\|_V\|\eta^{\omega}_{\tilde{\nu},n}\|_V + \frac{2\Gamma_{\nu,\tilde{\nu},n}^\omega}{\nu Q^\omega(D)\tilde{\nu}Q^\omega(D)\prod_{i=0}^{n-1}1\wedge\Psi(\theta^k\omega)^2},\label{eq:eta_bound_gam}
\end{multline}
where
$$
\Gamma_{\nu,\tilde{\nu},n}^\omega:= \int_{\bar{\setX}^{n+1}}\bar{V}(\bar{x}_n)\ind\left\{M_{\bar{C}_d^c,n}(\bar{x}_{0:n-1})\geq (n-\lfloor n\beta\rfloor)/2\right\}\,\nu\otimes\tilde{\nu}(\dd\bar{x}_0) \prod_{i=0}^{n-1}\bar{Q}^{\theta^i\omega}(\bar{x}_{i},\dd\bar{x}_{i+1}).
$$
Re-writing $\Gamma_{\nu,\tilde{\nu},n}^\omega$,
\begin{align*}
\Gamma_{\nu,\tilde{\nu},n}^\omega & = \left(\prod_{i=0}^{n-1}1\vee\Upsilon(\theta^i\omega)^2\right)\int_{\bar{X}^{n+1}}\bar{V}(\bar{x}_0)\nu\otimes\tilde{\nu}(\dd\bar{x}_0) \ind\left\{M_{\bar{C}_d^c,n}(\bar{x}_{0:n-1})\geq (n-\lfloor n\beta\rfloor)/2\right\}\\
&\quad \times \left(e^{-d \sum_{i=0}^{n-1}\ind_{\bar{C}_d^c}(\bar{x}_i)\ind_K(\theta^i\omega)}\right)\prod_{i=0}^{n-1}\frac{\bar{Q}^{\theta^i\omega}(\bar{x}_i,\dd\bar{x}_{i+1})\bar{V}(\bar{x}_{i+1})}{\bar{V}(\bar{x}_{i})e^{-d\ind_{\bar{C}_d^c}(\bar{x}_i)\ind_K(\theta^i\omega)}1\vee\Upsilon(\theta^i\omega)^2}.
\end{align*}
Following very similar arguments to those of \citet[][Proof of Proposition 5, p.2715]{Douc2012}, under the hypothesis of the proposition on $I_{0,n-1}^\omega$ we have for any $\beta\in(0,\gamma^+)$,
$$
\left(e^{-d \sum_{i=0}^{n-1}\ind_{\bar{C}_d^c}(\bar{x}_i)\ind_K(\theta^i\omega)}\right)\ind\left\{M_{\bar{C}_d^c,n}(\bar{x}_{0:n-1})\geq (n-\lfloor n\beta\rfloor)/2\right\} \leq e^{-dn\lfloor(\gamma^+-\beta) \rfloor/2},
$$
and it follows from \ref{H:drift} that
$$
\sup_{\bar{x}_i\in\bar{\setX}}\frac{\int_{\bar{\setX}} \bar{Q}^{\theta^i\omega}(\bar{x}_i,\dd\bar{x}_{i+1})\bar{V}(\bar{x}_{i+1})}{\bar{V}(\bar{x}_{i})e^{-d\ind_{\bar{C}_d^c}(\bar{x}_i)\ind_K(\theta^i\omega)}1\vee\Upsilon(\theta^i\omega)^2} \leq 1.
$$
The proof is completed upon applying these last two inequalities to bound $\Gamma_{\nu,\tilde{\nu},n}^\omega$ in \eqref{eq:eta_bound_gam}.

\end{proof}

\subsection{Proof of Theorem \ref{thm:conv_to_zero}}

\begin{proof}[Proof of Theorem \ref{thm:conv_to_zero}]
By \ref{H:setK}, $\P(K)>2/3$, implying that there exist $0<\gamma^-<\gamma^+<1$ such that $\P(K)>(1-\gamma^-)\vee (1+\gamma^+)/2$ \citep[see][Remark 5]{Douc2014}. Consequently, under \ref{H:theta} and by Birkhoff's ergodic theorem, there exists $\underline{N}^{\omega}\in\mathbb{N}$ such that
$$
n\geq \underline{N}^{\omega} \;\Rightarrow\; n^{-1}|I^{\omega}_{0,n-1}|\geq  (1- \gamma^-)\vee (1+\gamma^+)/2,\quad\ass
$$
With $Z(\omega)$ as in Lemma \ref{lem:Echeck}, and under \ref{H:theta}, \ref{H:Upsilon}, \ref{H:Psi}, there exists $l \geq 0$ such that
\begin{equation}
\xi_n^\omega:= \frac{1}{n}\sum_{k=0}^{n-1}\log Z(\theta^k \omega)- l\;\to\; 0,\quad\text{as}\quad n \to+\infty, \quad\ass \label{eq:xi_n_thm1}
\end{equation}

Now fix any $\beta\in(\gamma^-,\gamma^+)$ and $d \geq \underline{d}$ such that
\begin{equation}
d(\gamma^+-\beta)/2 > 2l,\label{eq:d_thm1}
\end{equation}
and with $\rho_{C_d}:=\sup_{\omega\in K}\big\{1-\big(\epsilon_{C_d}^-(\omega)/\epsilon_{C_d}^+(\omega)\big)^2\big\}\in [0,1)$ as in Proposition \ref{prop:forget}, then also fix $\rho\in(0,1)$ such that
\begin{equation}
\rho > \rho_{C_d}^{\beta-\gamma^-} \vee e^{-d(\gamma^+ -\beta)/2+2l}. \label{eq:rho_thm1}
\end{equation}

By Proposition \ref{prop:forget}, for $\P$-almost any $\omega$ and $n\geq \underline{N}^{\omega}$,
\begin{align}
\rho^{-n}\|\eta^{\omega}_{\nu,n}-\eta^{\omega}_{\tilde{\nu},n}\|_V& \nonumber \\
&\leq  2\rho^{-n}\rho_{C_d}^{\lfloor n (\beta-\gamma^-)\rfloor}\|\eta^{\omega}_{\nu,n}\|_V\|\eta^{\omega}_{\tilde{\nu},n}\|_V \label{eq:rho_eta_diff1}\\
&+2\frac{\nu^{\omega}(V)}{\nu^{\omega}Q^\omega(D) } \frac{\tilde{\nu}^{\omega}(V)}{\tilde{\nu}^{\omega}Q^\omega(D) }\rho^{-n}e^{-d  \lfloor n(\gamma^+-\beta)\rfloor/2}\prod_{i=0}^{n-1}Z(\theta^i\omega)^2,\label{eq:rho_eta_diff2}
\end{align}
where $\nu^{\omega}(V)/\nu^\omega Q^\omega(D)<+\infty$ and $\tilde{\nu}^{\omega}Q^\omega(V)/\tilde{\nu}^\omega(D)<+\infty$, $\P$-a.s., since $\nu,\tilde{\nu}\in\mathcal{M}(D,V)$.

For the term in \eqref{eq:rho_eta_diff1},
\begin{align*}
&\rho^{-n} \rho_{C_d}^{\lfloor n (\beta-\gamma^-)\rfloor}\|\eta^{\omega}_{\nu,n}\|_V\|\eta^{\omega}_{\tilde{\nu},n}\|_V \\
& \leq  \rho_{C_d}^{-1}  \left(\frac{\rho_{C_d}^{\beta-\gamma^-} }{ \rho}\right)^{n/2}\|\eta^{\omega}_{\nu,n}\|_V \left(\frac{\rho_{C_d}^{\beta-\gamma^-} }{ \rho}\right)^{n/2}\|\eta^{\omega}_{\tilde{\nu},n}\|_V\\
&\to 0, \quad\text{as}\quad n\to+\infty,\quad \ass,
\end{align*}
where the convergence is due to \eqref{eq:rho_thm1} and Proposition \ref{prop:Vnorm}.

For the term in \eqref{eq:rho_eta_diff2},
\begin{align*}
\rho^{-n}e^{-d  \lfloor n(\gamma^+-\beta)\rfloor/2}\prod_{i=0}^{n-1}Z(\theta^i\omega)^2 &\leq e^{d/2}  \rho^{-n}  e^{-d   n(\gamma^+ -\beta)/2} \prod_{i=0}^{n-1}Z(\theta^i\omega)^2 \\
& = e^{d/2}  \rho^{-n} e^{- n( d (\gamma^+ -\beta)/2 -2 l) } e^{2n\xi_n}\\
&\to 0, \quad\text{as}\quad n\to+\infty,\quad \ass,
\end{align*}
where  the convergence is due to \eqref{eq:xi_n_thm1}, \eqref{eq:d_thm1} and \eqref{eq:rho_thm1}. The proof is complete.

\end{proof}

\section*{Acknowledgement}

Mathieu Gerber is supported by DARPA under Grant No. FA8750-14-2-0117.

\bibliographystyle{apalike}
\bibliography{complete}

\end{document}